\documentclass[a4paper,fleqn]{cas-dc}

\usepackage[utf8]{inputenc}
\usepackage[T1]{fontenc}

\newtheorem{theorem}{Theorem}
\newtheorem{definition}{Definition}
\newtheorem{lemma}{Lemma}

\usepackage[numbers]{natbib}

\usepackage{comment}

\usepackage{epstopdf}
\usepackage{subfigure}
\usepackage{algorithm}

\usepackage{array}
\usepackage{rotating}
\usepackage{multirow}
\usepackage{float}
\usepackage{url}
\usepackage{graphicx}

\usepackage{amsmath} 
\usepackage{natbib}

\usepackage{color,soul}

\usepackage{algpseudocode}
\usepackage{paralist}

\usepackage{forest}

\usepackage{xcolor}

\def\tsc#1{\csdef{#1}{\textsc{\lowercase{#1}}\xspace}}
\tsc{WGM}
\tsc{QE}

\begin{document}
\let\WriteBookmarks\relax
\def\floatpagepagefraction{1}
\def\textpagefraction{.001}

\sloppy

\shorttitle{ESBT: A Scalable and Deterministic Sequence CRDT for Distributed Collaborative Editing}    

\title [mode = title]{ ESBT: A Scalable and Deterministic Sequence CRDT for Distributed Collaborative Editing}

\author[drissaddress]{Moulay Driss Mechaoui}
\cormark[1]
\fnmark[1]
\ead{driss.mechaoui@univ-mosta.dz}
\affiliation[drissaddress]{organization={CSTL Lab, University of Mostaganem - Abdelhamid Ibn Badis},
           city={Mostaganem},
            country={Algeria}}

\author[imineaddress]{Abdessamad Imine}
\fnmark[2]
\ead{abdessamad.imine@loria.fr}
\affiliation[imineaddress]{organization={Universit\'e de Lorraine, CNRS, Inria, LORIA},
            city={F-54000 Nancy},
            country={France}}

\cortext[1]{Corresponding author}

\begin{abstract}
Modern collaborative editing systems require efficient mechanisms for managing concurrent updates across distributed replicas. Sequence Conflict-free Replicated Data Types (CRDTs) have become the de facto standard for supporting decentralized collaboration; they enable decentralized replicas to apply operations in arbitrary order while converging to a common state. Although existing sequence CRDTs guarantee Strong Eventual Consistency (SEC), they often suffer from uncontrolled identifier growth and increasing memory consumption during long-running, highly concurrent editing sessions, which severely limits their scalability and performance.
This paper presents the Extended Stern–Brocot Tree (ESBT), a mathematically grounded identifier allocation scheme for distributed collaborative text editing.
ESBT extends the classical Stern-Brocot Tree representation by combining bounded rational fractions ($f$), sequence numbers ($sn$), and sequence paths ($sc$) into a hierarchical identifier structure that provides a dense, deterministic, and compact identifier space. This design bounds identifier growth while preserving deterministic ordering and Strong Eventual Consistency. In addition, ESBT integrates a Red–Black tree document representation, enabling logarithmic-time ($O(\log n)$) insertion, deletion, and lookup operations.\\
Experimental evaluation using workloads of up to 100,000 concurrent operations generated across 50 collaborating sites shows that ESBT improves responsiveness by 28 to 88\% under pure insertions and 59 to 74\% under mixed insertion/deletion workloads, while reducing identifier memory consumption by 50 to 75\% in beginning and random insertion patterns compared with the best-performing baseline sequence CRDTs (Logoot and LSEQ). Under the adversarial middle-insertion workload (10,000 operations), ESBT further improves responsiveness by 86.53\% and reduces identifier size by 92.81\%. These results demonstrate that ESBT effectively addresses the principal scalability limitations of existing sequence CRDTs and provides an efficient and scalable foundation for large-scale collaborative editing systems.
\end{abstract}

\begin{keywords}
Collaborative editing \sep Sequence CRDT\sep Identifier allocation\sep Stern–Brocot tree\sep Scalable distributed systems\sep Strong Eventual Consistency (SEC)\sep Red–Black tree.

\end{keywords}

\maketitle

\section{Introduction}
Collaborative editing platforms have become an essential component of modern distributed applications, enabling multiple geographically distributed users to update shared text, code, or graphical content simultaneously. To ensure high availability and low latency, each user maintains a local replica that can be updated independently without immediate synchronization across replicas. However, maintaining consistency across all replicas in a decentralized manner remains challenging. Beyond collaborative editing, maintaining deterministic ordering of concurrent operations is equally important in replicated data stores, distributed ledgers, replicated state machines, multiplayer game states, and other large-scale distributed systems.
Classical coordination mechanisms such as Lamport timestamps~\cite{lamport} and vector clocks~\cite{vector} can establish causality but fall short when it comes to enforcing a dense, deterministic order for concurrent operations. This limitation becomes highly evident in decentralized collaborative text editing. Without central coordination, replicas must independently resolve a consistent state, as multiple users simultaneously insert new characters directly between existing ones. 
Early collaborative editors addressed this challenge using Operational Transformation (OT)~\cite{Imi06}, which offers strong responsiveness through transformation functions that adjust the parameters (positions) of concurrent operations supporting arbitrary execution order. Although OT is effective for online collaboration with low divergence, its transformation overhead increases as replicas diverge, limiting its scalability in highly distributed and intermittently connected environments.\\ 
To avoid these scalability constraints, Conflict-Free Replicated Data Types (CRDTs)~\cite{logoot,crdt2024,ps-crdt} have been widely adopted as an alternative replication paradigm. CRDTs shift the focus from operation transformation to operation commutativity. Specifically, Sequence CRDTs guarantee that concurrent operations commute naturally. This enables replicas to execute updates independently and automatically achieve Strong Eventual Consistency (SEC)~\cite{sec17}. This property is preserved by attributing each element a unique and immutable identifier that defines its logical position within all replicas. Replicas can then merge concurrent insertions deterministically without centralized coordination, making sequence CRDTs the modern standard for decentralized collaborative architectures.\\
Despite their advantages, existing sequence CRDTs continue to exhibit important scalability limitations. Existing sequence CRDTs have evolved along two main lines: $(i)$ Tombstone-based approaches~\cite{woot, rga, geo_crdt26} preserve deleted elements (tombstones) to maintain order~\cite{crdt1}, which leads to unbounded state growth over time and costly garbage collection. $(ii)$ Identifier-allocation approaches~\cite{logoot, NedelecMMD13} avoid tombstones by assigning dense position identifiers. However, under unfavorable or highly localized insertion patterns, these identifiers may grow substantially, increasing both memory consumption and identifier comparison costs.\\ 
To the best of our knowledge, existing sequence CRDTs do not simultaneously ensure $(1)$ bounded identifier growth, $(2)$ tombstone-free deletion management, and $(3)$ lightweight synchronization without replica-sized causal metadata with baseline guarantees of deterministic ordering and Strong Eventual Consistency.  These limitations become increasingly significant in dynamic collaborative environments characterized by long editing sessions, intermittent connectivity, and users joining or leaving collaboration session at any time.\\
\noindent\textbf{Contributions}. To address these limitations, this paper proposes the Extended Stern-Brocot Tree (ESBT), a novel identifier allocation strategy for sequence CRDTs. ESBT extends the mathematical properties of the classical Stern-Brocot tree~\cite{stern1, stern2} to generate compact, deterministic, and densely ordered identifiers while preventing the unbounded growth of identifiers observed in existing identifier-allocation schemes, and without relying on tombstones. Combined with a self-balanced Red-Black tree, ESBT achieves logarithmic $O(\log n)$ insertion, deletion, and lookup operations, making it suitable for large-scale collaborative editing. Furthermore, ESBT introduces a lightweight dependency-based synchronization mechanism that preserves causal correctness using only direct operation dependencies, thereby avoiding replica-sized causal metadata while supporting dynamic participation in decentralized collaborative environments. The main contributions of this work are summarized as follows:
\begin{itemize}
\item  We propose ESBT, a novel Stern–Brocot-tree-based identifier allocation algorithm for sequence CRDTs that generates compact and deterministic position identifiers while preserving dense ordering.
\item We introduce a bounded identifier allocation mechanism that prevents uncontrolled identifier expansion, and supports tombstone-free deletion management.
\item We propose a lightweight synchronization protocol based on direct operation dependencies that preserves causal correctness without maintaining replica-sized causal metadata.
\item We formally prove that ESBT guarantees identifier uniqueness, deterministic total ordering, convergence, and Strong Eventual Consistency.
\item We experimentally evaluate ESBT against representative sequence CRDTs, including Logoot~\cite{logoot} and LSEQ~\cite{NedelecMMD13} under multiple allocation strategies and collaborative workloads. The results demonstrate lower memory consumption, more compact identifiers, and faster execution times under heavy concurrency. 
\end{itemize}
Although motivated by collaborative text editing, the proposed identifier allocation strategy can also be applied to a broader class of replicated ordered data structures, including hierarchical documents, XML, JSON, RDF graphs, Linked Data, and other Semantic Web data models~\cite{web_sem}.

\medskip
\noindent\textbf{Outline} The rest of this paper is organized as follows. Section~\ref{sec:background} reviews the properties of collaborative editing systems and introduces the necessary background. Section~\ref{sec:esbt} presents the Extended Stern–Brocot Tree (ESBT) model. Section~\ref{sec:coordination_model} describes the coordination model, while Section~\ref{sec:esbt_synch} details the ESBT synchronization layer. Section~\ref{sec:formal} formalizes the ESBT specification, and Section~\ref{sec:asymptotic} analyzes its asymptotic complexities. Section~\ref{sec:performance} reports the performance evaluation. Section~\ref{sec:related_work} discusses related work, and Section~\ref{sec:conclusion} concludes the paper. 

\section{Background}
\label{sec:background}
\subsection{Collaborative Editing Systems}\label{ces}
Collaborative editing systems typically rely on optimistic replication, which enables each user to independently modify a local replica of a shared document without immediate synchronization. This approach improves responsiveness and fault tolerance but poses the problem of ensuring that all replicas converge to an identical state despite simultaneous modifications.
Several correctness models have been proposed for collaborative editing. Among them, the CCI (Convergence, Causality, and Intention Preservation) model~\cite{Sun.ea:98} defines three essential requirements: all replicas must eventually converge to the same state, causal dependencies between operations must be preserved, and semantic intentions of users should be maintained whenever possible.
Modern decentralized collaborative systems commonly adopt Strong Eventual Consistency (SEC)~\cite{sec17}, which guarantees that replicas converge to the same state after all operations have been delivered, regardless of message ordering, provided that concurrent operations commute. SEC therefore provides the consistency model underpinning most CRDT-based collaborative editing systems.\\
The Stern–Brocot tree~\cite{stern1,stern2} provides a mathematical framework to generate an infinite, densely ordered set of rational numbers, making it an attractive foundation for identifier allocation in sequence CRDTs.

\subsection{Stern-Brocot Tree}
The Stern-Brocot tree is an infinite binary tree in which each node represents a distinct positive rational number in its reduced form that occurs exactly once~\cite{stern1,stern2}. The tree is built by recursively inserting the mediant between two adjacent fractions, where the mediant of two fractions \(\frac{a}{b}\) and \(\frac{c}{d}\) is \( \frac{a~+~c}{b~+~d} \). Starting from two sentinel nodes\(\frac{0}{1}\) and \(\frac{1}{0}\), their mediant is the root node \(\frac{1}{1}\), and the process continues recursively as depicted in Figure~\ref{fig:tree}. Each node has a well-defined level, which reflects its depth in the tree. The Stern-Brocot tree naturally sorts out the fractions in ascending order under an in-order traversal. By construction, every fraction created through the mediant is unique and preserves the numerical ordering of adjacent fractions. 

\begin{figure*}[ht!]
\centering
\includegraphics[scale=0.4]{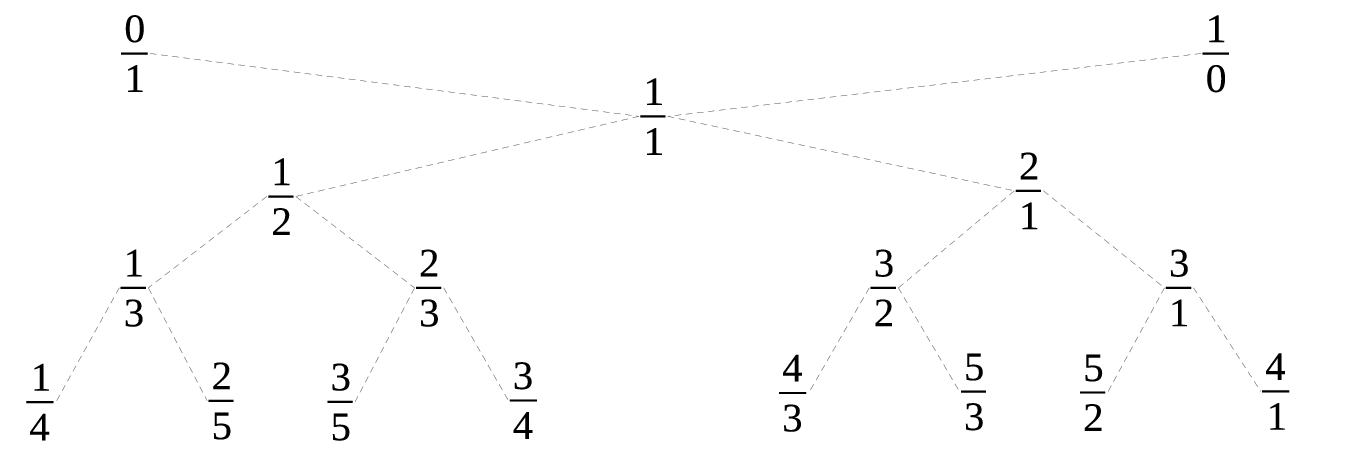}
\caption{The Stern–Brocot tree structure~\cite{stern1}.}
\label{fig:tree}
\vspace{-5mm}
\end{figure*}  

The Stern-Brocot tree has several properties that make it efficient that make it attractive for identifier allocation. First, each fraction is the mediant of its two parent fractions, making its value independent of other parts of the tree. Second, fractions within each level are always sorted by magnitude. These features allow for a targeted search, eliminating the need to compute the entire tree. Finally, the recursive mediant process generates an infinitely dense sequence of fractions, which ensures that new fractions can always be allocated between adjacent fractions. These properties make the tree a particularly interesting candidate for generating unique identifiers per site (fraction, site identifier) in collaborative editing systems.\\ 
Nevertheless, the classical Stern-Brocot tree is not sufficient to resolve concurrent insertions that concurrently generate the same rational fraction at different replicas. 
Additional mechanisms are therefore required to guarantee identifier uniqueness while preserving deterministic ordering in decentralized collaborative environments.

\section{Extended Stern-Brocot Tree}
\label{sec:esbt}
This section presents the Extended Stern–Brocot Tree (ESBT), a novel identifier allocation strategy for sequence CRDTs. ESBT extends the classical Stern-Brocot tree~\cite{stern1,stern2} by augmenting mediant-based rational identifiers with additional disambiguation parameters to guarantee uniqueness, deterministic ordering, and bounded identifier growth under concurrent insertions.

\subsection{ESBT Definition}
ESBT generalizes the mediant-based enumeration of the Stern-Brocot tree by constructing composite identifiers that remain totally ordered, immutable, and unique under concurrent updates. Each identifier encodes the logical position of a sequence element and serves as the basis for deterministic merging in collaborative editing.\\
The fundamental identifier in ESBT is the \textbf{Weight}, a composite identifier that uniquely determines the logical position of an element within global order. 

\begin{definition}[Weight]\label{def:weight}
An \emph{ESBT weight} is a quadruple
\(W = \langle f,sn,sc,\delta \rangle\) where:
\begin{itemize}
  \item \(f = \frac{p}{q} \in \mathbb{Q}^+ \) where gcd(p, q) =1, is an irreducible positive fraction (mediant) generated through the mediant operation of the Stern-Brocot tree and serves as the primary ordering key,
  \item \(sn \in \mathbb{Z}\) is a sequence number, used to distinguish successive insertions that share the same rational fraction. 
  \item \(sc \in \mathbb{N}^{\ast}\) sequence path is a finite lexicographically ordered sequence of non-negative integers used only when all available sequence numbers associated with the same fraction have been exhausted.
  \item \(\delta\) is the replica (site) identifier, used only as the final deterministic tie-breaker when all preceding ordering components are identical.
\end{itemize}
\end{definition}
Consequently, identifier ordering is determined primarily by the logical insertion position represented by $f$, then by the local insertion order encoded by $sn$, followed by the sequence path $sc$. The replica identifier \(\delta\) is consulted only as a final deterministic tie-breaker, thus preserving user intention while guaranteeing global uniqueness.

\begin{definition}[Total Ordering Relation]\label{def:lexorder}
For two weights \(W_i = \langle f_i, sn_i, sc_i, \delta_i \rangle \) and \(W_j = \langle f_j, sn_j, sc_j,\delta_j\rangle\),
\[
W_i < W_j \;\Leftrightarrow\;
\begin{cases}
\; f_i < f_j, \\[3pt]
\text{or }\; f_i = f_j \land sn_i < sn_j, \\[4pt]
\text{or }\; f_i = f_j \land sn_i = sn_j \land sc_i \prec sc_j, \\[4pt]
\text{or }\; f_i = f_j \land sn_i = sn_j \land sc_i = sc_j \land \delta_i \prec \delta_j
\end{cases}
\]
where \(\prec\) denotes lexicographic order on finite integer sequences.
\end{definition}

The hierarchical comparison ensures that the rational fraction remains the primary determinant of logical position. Additional components are consulted only when preceding components are identical, thereby preserving user-intended ordering while guaranteeing deterministic conflict resolution under concurrent insertions.

\subsection*{ESBT Tree Structure}
The ESBT is organized as an infinite, ordered, and rooted binary tree of immutable weights. Two sentinel weights delimit the entire ordering space:\\
$ W_{\mathrm{BEGIN}} = (\tfrac{0}{1}, 0, [0], \emptyset),\qquad W_{\mathrm{END}}   = (\tfrac{1}{0}, 0, [0], \emptyset). $  The fractions $\tfrac{0}{1}$ and $\tfrac{1}{0}$ represent the lower and upper boundaries of the rational ordering space, respectively, and $\emptyset$ denotes the null site identifier reserved exclusively for sentinel nodes.
The root node is defined as \( W_{root} =\left(\frac{1}{1}, 0, [0],\emptyset \right) \). \\
Unlike the classical Stern–Brocot tree, every node in the ESBT stores a complete weight rather than only a rational fraction, allowing deterministic ordering under concurrent insertions.
\subsection*{Structural Invariants}
\begin{definition}[Structural Invariants]
The ESBT satisfies the following invariants:
\begin{enumerate}
    \item \textbf{Immutability.}
    Every weight is immutable after creation and uniquely identifies the logical position of a sequence element.
    \item \textbf{Ordering.}
    For every node $W_P$, all weights stored in the left subtree are smaller than $W_P$, while all weights stored in the right subtree are greater than $W_P$, according to the total ordering relation defined in Definition~2.
    \item \textbf{Uniqueness.}
    No two distinct sequence elements may share the same weight. When identical fractions are generated concurrently, the additional ordering components $(sn,sc,\delta)$ guarantee uniqueness without modifying previously allocated identifiers.
\end{enumerate}

\end{definition}

\subsubsection*{Allocation Principle}
Given two adjacent weights
\[
W_L=\langle f_L, sn_L, sc_L, \delta_L\rangle,
\qquad
W_R=\langle f_R, sn_R, sc_R, \delta_R\rangle,
\]
where  \[ f_L=\frac{p_L}{q_L},  \qquad  f_R=\frac{p_R}{q_R}, \]
ESBT allocates a new identifier by computing the mediant
\[  f_M  =  \frac{p_L+p_R}{q_L+q_R}. \]
The newly generated weight $W_M$ is inserted between $W_L$ and $W_R$ as \( W_L < W_M < W_R \) thereby preserving the strict total ordering defined in Definition~2.

\subsubsection*{Bounded Allocation Policy}
To prevent unbounded identifier growth, ESBT introduces a configurable bound $D_{\max}$ on the numerator and denominator of newly generated mediants. When this bound is exceeded, allocation proceeds hierarchically through the disambiguation layers $sn$, $sc$, and finally $\delta$, thereby preserving uniqueness, deterministic ordering, and bounded identifier growth.

\subsection{Hierarchical Ordering Layers}
To guarantee unique and totally ordered identifiers under arbitrary concurrent insertions, ESBT organizes each weight into three hierarchical ordering layers. The fraction layer provides the primary logical position using the Stern–Brocot tree, the sequence number layer extends this ordering when fractional refinement reaches the bound \(D_{\max}\), and the sequence path layer resolves any remaining collisions lexicographically. Consequently, ESBT preserves deterministic ordering, guarantees uniqueness, bounds fraction growth, and expands identifiers only when required by concurrent conflicts.

\subsubsection{Fraction Layer}
The fraction layer provides the primary ordering component of an ESBT weight. It exploits the density of the Stern--Brocot tree to allocate new weights between existing neighbors elements without requiring pre-allocation or global coordination.
Each weight in this layer is represented by an irreducible fraction: \\
\(f = \frac{p}{q}, \qquad p, q \in \mathbb{N}, \; \gcd(p,q)=1, \) which corresponds to a unique position in the infinite Stern–Brocot tree. Ordering is immediate:\\
\(\frac{p_1}{q_1} < \frac{p_2}{q_2} \;\;\Leftrightarrow\;\; p_1 q_2 < p_2 q_1. \)\\
The generation of a new weight proceeds by computing the \emph{mediant} of the left and right neighbors:\\
\( \text{mediant}\left(\frac{a}{b}, \frac{c}{d}\right) =\frac{a + c}{b + d}.\) \\
The mediant fraction lies strictly between its parents, which preserves order without needing global coordination. Repeated mediant operations allow for an arbitrarily dense embedding of new weights between any two existing ones.

\paragraph{Boundedness Constraint.}
To prevent unlimited numerator or denominator growth, ESBT enforces a global threshold \(D_{\max}\).
Once this bound is reached, further fractional refinement is suspended and identifier allocation proceeds through the sequence number and sequence path layers.
This design bounds growth in the hierarchy of the fraction space while preserving the ability to generate infinite weights for each logical position.

\begin{lemma}[Bounded Fraction Property]
Let each weight in ESBT be defined as:
\vspace{-3mm}
\[ W = \langle f = \tfrac{p}{q},\, sn,\, sc,\, \delta \rangle \]
If the ESBT allocation enforces a bounded numerator or denominator constraint such that:
\vspace{-3mm}
\[
p \leq D_{\max} \quad \text{or} \quad q \leq D_{\max},
\]
then every fraction \( f \) generated by the allocator remains within a finite, dense, and totally ordered rational interval.
\end{lemma}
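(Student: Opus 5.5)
The plan is to treat the disjunctive hypothesis as two separate cases, according to which coordinate the allocator is configured to bound. In each case I will establish three properties of the set $F$ of fractions the allocator can produce: \emph{finite interval}, \emph{density}, and \emph{total order}. Throughout, every allocated $f$ arises from the Allocation Principle as a mediant $f_M=\frac{p_L+p_R}{q_L+q_R}$ of two adjacent fractions. The induction starts from the sentinels $\frac{0}{1},\frac{1}{0}$ and the root $\frac{1}{1}$.

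\textbf{Step 1 (Stern--Brocot invariant).} By induction on the number of allocations, I will show that adjacent fractions $\frac{p_L}{q_L}<\frac{p_R}{q_R}$ always satisfy $p_Rq_L-p_Lq_R=1$. This holds for the sentinel pair. If it holds for a pair, then both new pairs created by inserting the mediant inherit it, since
\[
(p_L+p_R)q_L-p_L(q_L+q_R)=p_Rq_L-p_Lq_R=1 ,
\]
and symmetrically on the right. Two consequences follow. Every mediant is automatically irreducible, so $\gcd(p,q)=1$ as required by Definition~\ref{def:weight}. Moreover, $f_L<f_M<f_R$ follows from the cross-multiplication criterion $p_1q_2<p_2q_1$ of the Fraction Layer.

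\textbf{Step 2 (Total order and density).} Total order is immediate because $F\subset\mathbb{Q}^+\cup\{0\}$, and $\mathbb{Q}$ is totally ordered by $p_1q_2<p_2q_1$. For density, take any two generated adjacent fractions whose mediant still satisfies the active bound. Step~1 shows the mediant lies strictly between them, and iterating this refines the gap. Density is thus understood in the sense of the statement: the set of bounded fractions is the Stern--Brocot tree truncated by the bound. Between any two of its elements, every reduced fraction in between that obeys the bound is reachable by repeated mediants, because the Stern--Brocot tree enumerates every reduced rational exactly once along the in-order traversal.

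\textbf{Step 3 (Finite interval, case $p\le D_{\max}$).} Since $q\ge 1$ for every non-sentinel fraction, we get $0\le f=p/q\le D_{\max}$. Hence $F\subseteq[0,D_{\max}]$, and there are in fact only finitely many fractions with a given numerator in any interval away from $0$.

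\textbf{Step 4 (Finite interval, case $q\le D_{\max}$).} Every nonzero fraction satisfies $f\ge 1/q\ge 1/D_{\max}$, which gives the lower end. The upper end is the main obstacle. Taking the mediant with $W_{\mathrm{END}}=\frac{1}{0}$ produces $\frac{p+1}{q}$, which keeps $q$ fixed, so the denominator bound alone does not stop $p$ from growing. My first attempt will be to show that such an END-adjacent mediant increases $p$ by exactly one per allocation. Then after $N$ allocations we have $f\le N+1$, so the set generated at any finite point of execution lies in the finite interval $[0,\max F]$.

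If the statement is meant uniformly over all executions, I will instead argue from the Bounded Allocation Policy. When the mediant to the right of the current maximum would leave the configured range, the allocator switches to the $sn$ layer rather than creating a new fraction. This caps $\max F$ by the largest fraction of denominator at most $D_{\max}$ admitted before suspension. Pinning down exactly which reading the allocator enforces is the delicate point of this case, and I will state that reading explicitly in the proof.
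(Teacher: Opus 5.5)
The paper states this lemma without proof; the text passes directly to the sequence-number layer. So your proposal has to stand on its own, and it has a genuine gap in Step~4. The gap cannot be repaired, because Case~2 is false. Under a denominator-only bound, the mediant of the current maximum $\frac{p}{q}$ with $W_{\mathrm{END}}$ is $\frac{p+1}{q}$. Repeated end insertions therefore yield $\frac{2}{1},\frac{3}{1},\frac{4}{1},\dots$, all with $q=1\le D_{\max}$. The suspension you invoke in your ``uniform'' alternative never fires, and the generated fractions are unbounded above. Your execution-dependent bound $f\le N+1$ does not help: every finite set of rationals is bounded, and $D_{\max}$ plays no role in that bound.

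The same failure affects the test actually implemented at line~10 of \textsc{Create\_Weight}. That test is the per-fraction disjunction $p<D_{\max}$ or $q<D_{\max}$, which accepts $\frac{k}{1}$ and $\frac{1}{k}$ for every $k$. So under the lemma's literal disjunctive hypothesis, the finite-interval conclusion fails.

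What you can prove, once you fix the reading explicitly, is the conjunctive version $\max(p,q)\le D_{\max}$ used in Theorem~\ref{th:mediant}. Every generated fraction then satisfies $\frac{1}{D_{\max}}\le\frac{p}{q}\le D_{\max}$, and there are at most $D_{\max}^2$ of them. Fallback fractions $f_b$ are existing non-sentinel neighbours, so they stay in this set. Note that, contrary to your opening assumption, not every allocated fraction is a mediant. Your Step~3 (numerator-only bound, interval $[0,D_{\max}]$) is the only other reading that works.

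There are two secondary problems.

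\textbf{Step~1.} The invariant $p_Rq_L-p_Lq_R=1$ does not survive the replicated setting. Deleting $\frac12$ from $\frac13,\frac12,\frac11$ leaves an adjacent pair with determinant $2$. Also, Algorithm~\ref{alg:control_concurrency} integrates insertions without causal ordering, so a replica can hold $\frac13$ before $\frac12$. You do not need the invariant: $\frac{a}{b}<\frac{a+c}{b+d}<\frac{c}{d}$ holds for any $\frac{a}{b}<\frac{c}{d}$ with $b,d>0$, and in the extended sense for $\frac{c}{d}=\frac10$. Irreducibility comes from the gcd step of the allocator.

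\textbf{Step~2.} This step does not establish density. Reduced fractions with denominators at most $D_{\max}$ are separated by at least $1/D_{\max}^2$. With numerators at most $2$, nothing lies strictly between $\frac11$ and $\frac21$. Your ``reachability in the truncated Stern--Brocot tree'' is a different statement. Density and total order can only be read as properties of the ambient rational interval, where they are trivial, or of the full weight space, where $sn$ and $sc$ provide the refinement.
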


\subsubsection{Sequence Number Layer}
The fraction $f$ is not sufficient to guarantee uniqueness when multiple sites (replicas) concurrently insert elements at the same fractional position. 
The second layer, \emph{sequence number} ($sn$), provides a lightweight, integer-based disambiguation mechanism that ensures deterministic and bounded ordering among elements sharing the same fraction. Formally, for any two weights:
\[
W_i = \langle f_i, sn_i, sc_i, \delta_i \rangle
\quad \text{and} \quad
W_j = \langle f_j, sn_j, sc_j, \delta_j \rangle,
\]
if (\( f_i = f_j \)) then, their relative order is determined by comparing their local sequence numbers:
\[
sn_i < sn_j \; \Rightarrow \; W_i \prec W_j.
\]
This ordering rule ensures that weights originating from the same fractional interval are locally sequential and globally comparable across replicas.\\
Each site maintains a local \textbf{Tracker}, a hash map indexed by fraction values whose median exceeds $D_{max}$.

\begin{definition}[Site Tracker]\label{def:tracker}
Each site maintains a \emph{Tracker}:
\[\mathcal{T} : f \mapsto (sn_L, sn_R),\]
where \(sn_L\) and \(sn_R\) denote, respectively, the next \(sn\) to be allocated when allocating immediately to the left or right of the fraction \(f\).
The tracker allows controlled \(sn\) increment/decrement
instead of unbounded fraction refinement.
\end{definition}
The bounded fractions ($f$<$D_{max}$) remain untracked, ensuring that the structure remains lightweight and sparse. The sequence number layer resolves the conflict by monotonically increasing (or decreasing) \(sn\)  according to the allocation direction:

\begin{itemize}
\item \textbf{\textit{Left Allocation}}: When a new weight must precede an existing one that shares the same fraction \( f \), the site consults the left counter \( sn_L \) and assigns the next available lower integer:  
\(sn_{\text{new}} = sn_L - 1 \). 
\item \textbf{\textit{Right Allocation}}: In contract, when a new weight must follow within the same fraction, the right counter \( sn_R \) is incremented:  \(sn_{\text{new}} = sn_R + 1\).
\end{itemize}
The function ~\textsc{Create\_Weight} (see Algorithm~\ref{GId}) implements this policy by ensuring that, for any fraction \(f\) reaching the \(D_{\max}\) boundary, an infinite sequence of \(sn\) values is available without violating the order of neighboring weights. Each site allocates its local \(sn\) independently and  deterministically for a given fraction, ensuring that all replicas converge to a consistent global order that maintains strong eventual consistency.

\subsubsection{Sequence Path Layer}
\label{seq_path_layer}

The third layer is the \emph{sequence path} $sc$ where \( sc = [c_1, c_2, \dots, c_k] \) with \( c_i \in \mathbb{N} \). The \emph{sequence path} is a list of integers that acts as a deterministic tie-breaker in rare cases where two weights share the same fraction \( f \) and sequence number \( sn \).
By design, \(sc\) remains short or empty in most insertions; it is only created when \(sn\) alone cannot differentiate two concurrent insertions.\\
When \(f\) and \(sn\) are equal, the function \(\textsc{NewSeq}(sc_L, sc_R, base, DEPTH)\) (Algorithm~\ref{alg:newseq}) is invoked to compute a new sequence path strictly between the two neighbors \(sc_L\) and \(sc_R\).
The algorithm compares both paths digit by digit and attempts to locate the first depth where a numerical gap exists between the left value (\(lv\)) and right value (\(rv\)). 
If such a gap exists (\(interval = rv - lv - 1 > 0\)), the algorithm allocates a new digit at the midpoint of the interval:
\[ newVal = lv + \frac{interval}{2} + 1 \]
This midpoint allocation guarantees that the new path value lies strictly between its neighbors, ensuring deterministic ordering.\\
If the gap is exhausted (\(interval = 0\)), the common prefix is extended by appending \(lv\), and the algorithm descends to the next depth level.\\
If all depth levels are explored without finding any available gap (\(depth \geq DEPTH\)), a final fallback digit is appended using the site identifier: 
\vspace{-3mm}
\[tie = 1 + (siteId \bmod (base - 1)) \]
This ensures that each site still generates a unique sequence path, even under extreme concurrency.

\begin{algorithm}
\small
\caption{\textproc{newSeq}($left$, $right$, $base$, $\mathit{DEPTH}$)}
\label{alg:newseq}
\begin{algorithmic}[1]
\State $sc \gets []$
\State $depth \gets 0$
\While{true}
    \If{$depth < |left|$}
    \State $lv \gets left[depth]$
    \Else
    \State $lv \gets 0$
    \EndIf
    \If{$depth < |right|$}
    \State $rv \gets right[depth]$
    \Else
    \State $rv \gets base$
    \EndIf
    \State $interval \gets rv - lv - 1$
    \If{$interval > 0$}
        \State $newVal \gets lv + (\cfrac{interval}{2}) + 1$
        \If{$newVal \geq rv$}
            \State $newVal \gets rv - 1$
        \EndIf
        \State $sc.\text{add}(newVal)$
        \State \Return $sc$
    \Else
        \State $sc.\text{add}(lv)$
        \State $depth \gets depth + 1$
        \If{$depth \geq \mathit{DEPTH}$}
            \State $tie \gets 1 + (siteId \bmod (base - 1))$
            \State $sc.\text{add}(tie)$
            \State \Return $sc$
        \EndIf
    \EndIf
\EndWhile
\end{algorithmic}
\end{algorithm}

\subsubsection*{Example 1 (Gap available at low depth).}  
Consider \( sc_L = [3] \) and \( sc_R = [7] \), with \( base = 10 \), \( DEPTH = 3 \), and \( siteId = 2 \).\\  
At depth \(0\): \(lv = 3\), \(rv = 7\), giving \(interval = 3\).  
A midpoint is chosen: \(newVal = 3 + \frac{3}{2} + 1 = 5\).  
The resulting path is \( sc_{new} = [5] \), which lies strictly between the two neighbors.  
The process ends immediately since a valid midpoint was found.

\subsubsection*{Example 2 (No gap, deeper allocation required).}  Now take \( sc_L = [3] \) and \( sc_R = [4] \). At depth \(0\): \(lv = 3\), \(rv = 4\), resulting in \(interval = 0\). Since there is no gap, the algorithm appends \(3\) and moves to depth \(1\).  
At this level, the defaults are \(lv = 0\), \(rv = 10\), producing \(interval = 9\).  
The midpoint allocation gives \(newVal = 0 + \frac{9}{2} + 1 = 5\). The resulting sequence path is \( sc_{new} = [3, 5] \), which preserves strict order and stability. \\ 
If all depth levels are exhausted without finding a gap, the fallback rule appends a deterministic tie digit derived from \(siteId\), ensuring site-specific uniqueness.

\subsubsection{Sequence Path Order}
Since \(sc\) is totally ordered lexicographically, any two paths under the same fraction $f$ and sequence number $sn$ admit a deterministic comparison according the following definition.

\begin{definition}[Lexicographic Order on Sequence Paths]
\label{def:scorder}
Let $sc_i = [c_{i,0}, c_{i,1}, \dots, c_{i,m}]$ and 
$sc_j = [c_{j,0}, c_{j,1}, \dots, c_{j,n}]$ be two sequence paths. We define the lexicographic order $\prec$ as: 
\vspace{-3mm}
\[
sc_i \prec sc_j \;\;\Leftrightarrow\;\;
\exists\, d \;\; \big( c_{i,d} < c_{j,d} \;\;\land\;\;
\forall\, k < d,\, c_{i,k} = c_{j,k} \big).
\]
We say that, $sc_i$ precedes $sc_j$ if, at the first position $d$ where they differ, 
the digit of $sc_i$ is strictly smaller, while all earlier digits are equal. With $k$ an index for all earlier positions than $d$.
\end{definition} 
\vspace{-3mm}
In ESBT, most allocations are resolved at the fractional or sequence number layers, meaning that the vast majority of weights maintain an empty sequence path $sc = [0]$.
This layered design allows ESBT weights to achieve unlimited resolution while maintaining compactness, determinism, and performance efficiency across distributed replicas.

\begin{theorem}[Uniqueness]
For every insertion operation, \textsc{ESBT} generates a globally unique identifier through a hierarchical disambiguation process:
\begin{enumerate}
    \item a \emph{fractional mediant} $f_m$ is computed when \( f_m \leq D_{\text{max}} \);
    \item a \emph{sequence number} $sn$ is applied when mediants are exhausted \(f_m > D_{\text{max}}\) ;
    \item and a \emph{deterministic sequence-path}  $sc$ using \textsc{NewSeq} is applied when concurrent insertions still conflict.
\end{enumerate}
Consequently, the ordering relation \(<\) defines a total, replica-invariant order that remains consistent and preserved across all sites.
\end{theorem}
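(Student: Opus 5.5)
We split the statement into two claims: (a) every weight produced by an insertion differs from every other weight ever produced, on any replica; and (b) the relation $<$ of Definition~\ref{def:lexorder} is a strict total order whose outcome depends only on the weights themselves.

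For (b), which is the easier half, we argue componentwise. Fractions are compared by $p_1q_2<p_2q_1$, a strict total order on irreducible positive rationals (including the sentinels $\tfrac01,\tfrac10$). Sequence numbers are compared in $\mathbb{Z}$, which is totally ordered. Sequence paths are compared by the lexicographic order $\prec$ of Definition~\ref{def:scorder}. Here we fix one convention explicitly: a proper prefix is padded with $0$, which matches how \textsc{NewSeq} reads missing digits as $lv=0$. With that convention $\prec$ is also a strict total order. Site identifiers are totally ordered by assumption. A lexicographic product of strict total orders is again a strict total order, so irreflexivity, transitivity and trichotomy follow directly. Replica invariance is immediate: comparing $W_i$ and $W_j$ reads only the immutable fields of the two weights (Structural Invariant 1), never local state such as the Tracker. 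Every replica therefore obtains the same comparison result, and hence the same in-order sequence of elements.

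For (a), we do a case analysis along the three allocation stages, for $W_M$ allocated between adjacent $W_L<W_R$.

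\emph{Case 1: $f_m\le D_{\max}$.} The mediant property of the Stern--Brocot tree gives $f_L<f_M<f_R$. Lemma~1 (Bounded Fraction Property) keeps this inside the finite dense interval. So $W_L<W_M<W_R$, and $W_M$ differs from both neighbours. A concurrent insertion on another site may compute the same $f_M$; such a clash is resolved by the later components, as in Case 3.

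\emph{Case 2: $f_m> D_{\max}$.} We use the Tracker of Definition~\ref{def:tracker}. Left allocations are strictly decreasing ($sn_L-1$) and right allocations strictly increasing ($sn_R+1$). Hence, on a single site, no $sn$ is ever reused for a given fraction, which gives local uniqueness. Order with the neighbours follows from the second clause of Definition~\ref{def:lexorder}.

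\emph{Case 3: a concurrent equal pair $(f,sn)$.} \textsc{NewSeq} either returns a digit $newVal$ with $lv<newVal<rv$ at the first gap, giving $sc_L\prec sc\prec sc_R$, or, after $DEPTH$ levels, appends $1+(siteId \bmod (base-1))$.

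Two weights generated on different sites may still coincide in $(f,sn,sc)$. They then differ in $\delta$, because site identifiers are distinct. Two weights generated on the same site are distinguished by Case 2's monotone counters. Combining these, $(f,sn,sc,\delta)$ is globally unique.

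The main obstacle is Case 3 and its interaction with the site-level argument. First, \textsc{NewSeq} is deterministic in its inputs, so concurrent calls on different sites can return the same path. Second, the fallback digit depends on $siteId \bmod (base-1)$, which need not be injective. Hence $sc$ alone cannot carry uniqueness, and the proof must lean on $\delta$ as the final tie-breaker. We must also check that appending the tie digit (or $lv$ followed by $newVal$) really stays strictly below $sc_R$ under the padding convention. If the mod collision causes trouble, we would bypass it and rely on $\delta$ directly. We would also verify separately that the same site never issues the same $(f,sn,sc)$ twice, using Tracker monotonicity together with the fact that each local insertion updates the Tracker before the next allocation.
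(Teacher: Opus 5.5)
\textbf{Verdict: genuine gap.} Your decomposition matches the paper's sketches for Theorems~\ref{th:order} and~\ref{th:unique}: a componentwise lexicographic order for (b), and $\delta$ as the last tie-breaker for weights minted at different sites. Those two parts are fine. What fails is same-site uniqueness outside Case~2.

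Tracker monotonicity only separates two Case~2 allocations from each other. Case~1 and Case~3 never update the Tracker, and their outputs depend only on $(\omega_1,\omega_2)$ and the site. $\delta$ cannot separate two weights from the same site either. So your planned check that a site never issues the same $(f,sn,sc)$ twice cannot be carried out, and in fact it is false:
\begin{itemize}
\item \emph{Reinsertion after deletion.} If you delete an element and reinsert between the same neighbours, Case~1 recomputes the same mediant and returns the identical weight $(f_M,0,[0],\delta)$. The paper says this explicitly in Scenario~3, and it is why the counter $c$ exists. So (a), as you formulate it (``differs from every weight ever produced''), does not hold.
\item \emph{Equal neighbouring fractions.} Even among live elements, your Case~1 uses $f_L<f_M<f_R$, which needs $f_L<f_R$. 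When $f_L=f_R$, the reduced mediant is $f_L$ itself and passes the bound test. In the paper's own Situation~3 ($D_{\max}=5$, neighbours $(\tfrac14,0,[0],\delta_a)$ and $(\tfrac14,1,[0],\delta_a)$), Algorithm~\ref{alg:create-esbt-weight} returns $(\tfrac14,0,[0],\delta)$. Generated at site $a$, that is $W_L$ itself.
\item \emph{The depth fallback.} Suppose equal fractions are instead routed to the sequence-path layer, as that example intends. If $sc_L$ is this site's own fallback path and its right neighbour is unchanged, the first $\mathit{DEPTH}$ levels of \textsc{NewSeq} see exactly the same $lv,rv$ as when $sc_L$ was produced. \textsc{NewSeq} therefore appends the same tie digit and returns $sc_L$ verbatim. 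Inserting right after one's own fallback weight reproduces it.
\end{itemize}
So the tie-digit danger you sensed is a same-site problem, not a cross-site $\bmod$ collision, and ``rely on $\delta$'' gives nothing there.

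\textbf{What can be repaired.} From the paper's mechanisms, what is provable is uniqueness of the pair $(W,c)$. This is immediate, because $\delta$ together with the per-site insertion counter $c$ of Section~\ref{sec:esbt_synch} identifies the insertion. Any claim about $W$ alone must come from strict betweenness $W_L<W_M<W_R$ against adjacent live neighbours, not from the Tracker. That in turn requires two algorithmic restrictions: Case~1 must apply only when $f_L<f_R$, and the fallback must be unable to return $sc_L$.

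\textbf{Two smaller points in your treatment of $sc$.}
\begin{itemize}
\item With $0$-padding, $\prec$ is only a total preorder, since $[0]$ and $[0,0]$ tie. It is a strict total order on the paths that actually occur, because every \textsc{NewSeq} output ends in a digit $\ge 1$. You should restrict (b) to those paths.
\item \textsc{NewSeq} reads a missing right digit as $base$, not $0$. So finding a gap with $lv<newVal<rv$ does not by itself give $sc_L\prec sc\prec sc_R$. With $sc_L=sc_R=[0]$, as in Situation~3, it returns $[0,\mathit{mid}]\succ sc_R$. The new weight stays below $W_R$ only because the neighbours' sequence numbers differ.
\end{itemize}
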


\begin{figure*}[ht!]
\centering
\includegraphics[scale=0.6]{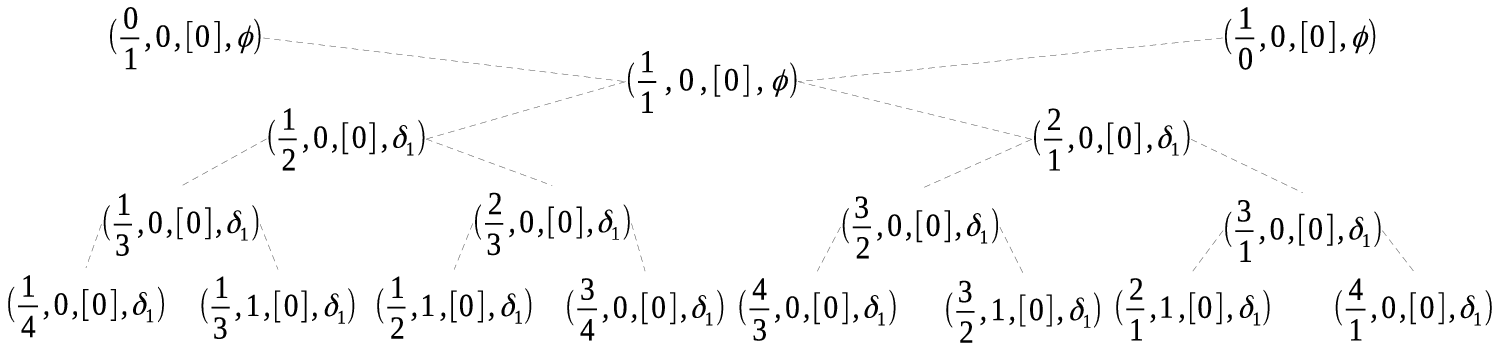}
\caption{The Extended Stern–Brocot tree structure (with $D_{max}$ = 5 ).}
\label{fig:esbttree}
\vspace{-5mm}
\end{figure*}  

\subsection{ESBT Allocation Function}
\label{GId}

\begin{algorithm}
\small
\caption{CREATE\_WEIGHT($\omega_1, \omega_2$)}
\label{alg:create-esbt-weight}
\begin{algorithmic}[1]
\State $\delta \gets$ site identifier
\State $base \gets$ available digit space at each depth
level
\State $depth \gets$ maximum number of levels
\State $num \gets \omega_1.p + \omega_2.p$
\State $den \gets \omega_1.q + \omega_2.q$
\State Let $\nabla \gets \gcd(num, den)$
\State $f_m.p \gets \frac{num}{\nabla}$,\quad $f_m.q \gets \frac{den}{\nabla}$
\State $f_m \gets \frac{f_m.p}{f_m.q}$
\State $D_{\max} \gets$ maximum $num$ and $den$ threshold

\If{$(f_m.p < D_{\max})$ \textbf{or} $(f_m.q < D_{\max})$}
    \State \Return $(f_m, 0, [0], \delta)$
\EndIf
\If{$\omega_1.f = \frac{0}{1}$}  
    \State $f_b \gets \omega_2.f$
\Else
    \State $f_b \gets \omega_1.f$
\EndIf
\State Tracker($f_b$) $\gets (sn_L, sn_R) $  
\If{$(f_b < \omega_2.f)$ \textbf{or} $((f_b = \omega_2.f) \textbf{ and } (sn_R < w_2.sn))$}
    \State $sn \gets sn_R + 1$
    \State $sc \gets \omega_1.sc$
    \State Update $sn_R \gets sn$ in Tracker
    \State \Return $(f_b, sn, sc, \delta)$
\ElsIf{ $(\omega_1.f < \omega_2.f)$ \textbf{ and } $( sn_L - 1 < \omega_2.sn)$}
    \State $sn \gets sn_L - 1$
    \State $sc \gets \omega_1.sc$
    \State Update $sn_L \gets sn$ in Tracker
    \State \Return $(f_b, sn, sc, \delta)$
\Else
    \State $sn \gets \omega_1.sn$
    \State $sc \gets$ \textproc{newSeq}($\omega_1.sc$, $\omega_2.sc$, $base$, $depth$)
    \State \Return $(f_b, sn, sc, \delta)$
\EndIf
\end{algorithmic}
\end{algorithm}


The \textsc{CREATE\_WEIGHT} function (Algorithm~\ref{alg:create-esbt-weight}) defines the core logic for allocating a new unique weight in ESBT. This function ensures consistent insertion order, bounded fraction precision, and convergence across replicas while preserving compact and ordered weights. The algorithm takes as input two adjacent weights $\omega_1$ and $\omega_2$, and returns a new ESBT weight of the form $(f, sn, sc, \delta)$.

\subsubsection*{Case 1: Mediant fits within $D_{max}$}
The function computes the mediant fraction $f_m$ between the two input fractions (lines 2--11):
\[f_m = \frac{\omega_1.p + \omega_2.p}{\omega_1.q + \omega_2.q} \] 
The resulting numerator and denominator are then reduced to the lowest terms using the greatest common divisor (GCD). This mediant provides a new rational value strictly between the two fractions, preserving the insertion order. If the simplified mediant numerator or denominator is less than $D_{\max}$, the function immediately returns a canonical weight with this fraction ($f_m$), a default sequence number $sn = 0$, and a default sequence path $sc = [0]$ (line~11). This case covers the majority of insertions in the sparse region of the weight space and avoids further disambiguation.\\
\textbf{Example} :  
Let \( w_1 = \left( \frac{1}{3}, 0, [0],\delta_1 \right), \quad w_2 = \left( \frac{1}{2}, 0, [0], \delta_1 \right),\) and $D_{\max} = 10$.\\  
The computed mediant is : $\frac{1 + 1}{3 + 2} = \frac{2}{5}$. 
Since ($2 < D_{\max}$) and ($5 < D_{\max}$), the mediant is valid. The function returns the new weight: ($\frac{2}{5}$, 0, [0],$\delta_1$).

\subsubsection*{Fallback Fraction Selection (Lines 13--16)}
If the mediant exceeds $D_{\max}$, the function falls back to the next allocation layer. A fallback fraction $f_b$ is selected from left weight $\omega_1.f$ unless $\omega_1$ is a sentinel (e.g., $\frac{0}{1}$), in which case $\omega_2.f$ is used. This ensures that disambiguation is always performed within a well-defined and non-sentinel range. This strategy biases allocations towards the left weight (which has a smaller fraction value), yet this consistent preference simplifies the algorithm's logic.
\subsubsection*{Case 2: Sequence Number Layer (Lines 18--28) }
The function uses a \textit{local Tracker} to maintain the left ($sn_L$) and right ($sn_R$) sequence number boundaries for each fallback fraction $\forall f_b : sn_L \leq sn \leq sn_R$.
It attempts to assign a new sequence number using the following rules:

\begin{itemize}
  \item \textit{Right Allocation}: If the fallback fraction $f_b$ is strictly less than $\omega_2.f$, or equal but with a smaller right $sn$, a new $sn$ is assigned as $sn_R + 1$ with $sc = [0]$. The right tracker is updated accordingly (lines 19--23).
  
  \item \textit{Left Allocation}: If $\omega_1.f$ less than $\omega_2.f$ and the left gap is sufficient ($\omega_2.sn < sn_L - 1$), a new $sn$ is assigned as $sn_L - 1$, again with $sc = [0]$. The left tracker is updated accordingly (lines 24--28).
\end{itemize}
These optimizations avoid generating deeper $sc$ paths when a free sequence number is available, ensuring better compactness of the weight and faster comparison.

\subsubsection*{Case 3: Sequence Path Layer (line~29--32)}
This final case is a fallback when the sequence number space between the relevant weights has been exhausted. The algorithm reuses the sequence number $sn_1$ and delegates the responsibility of creating a unique weight to the \textproc{newSeq} function (line~31) as presented in Sub-Section~\ref{seq_path_layer}.\\ 
The role of \textproc{newSeq} is to generate a new sequence path $sc_m$ that is lexicographically between $sc_1$ and $sc_2$.
This guarantees that weights can always be inserted between any two existing weights by extending the $sc$ path when needed. The function then returns the new weight with the fallback fraction $f_b$, the preserved sequence number, the newly generated $sc$ path, and the local site identifier $\delta$ (line~32).\\
By prioritizing mediant fractions, falling back to $sn$ when needed, and using $sc$ paths only as a last resort, the algorithm ensures both efficiency and global convergence.

\subsection{Illustrative Examples.}  
Consider a collaborative document represented as an Extended Stern–Brocot Tree (ESBT) with \(D_{\max} = 5\), \( base = 10 \), \( DEPTH = 3 \), and \( siteId = 2 \).\\ Suppose the current tree contains two adjacent weights at fraction layer:\\
\( w_1 = (\frac{1}{4}, 0, [], \delta_a) \) and 
\( w_2 = (\frac{2}{3}, 0, [], \delta_a) \).\\

\noindent \textbf{\textit{Situation 1: Fraction Layer Insertion.}} \\
When a user attempts to insert a new element between \(w_1\) and \(w_2\), the mediant is computed as
\(m = \frac{1+2}{4+3} = \frac{3}{7}.\) Since the denominator \(7 > D_{\max}\), the fraction layer cannot support this insertion, so the responsibility is delegated to the upper layers of the ESBT allocation strategy.\\

\noindent \textbf{\textit{Situation 2: Sequence Number Layer Insertion.}} \\ 
If a user attempts to insert between \( w_1 \) and \( w_2 \).  
The mediant is \( \frac{3}{7} \), since its denominator exceeds \( D_{\max} \), the algorithm falls back to the sequence number layer. \\ 
\textbf{\textit{Right Insertion.}} \\ 
Inserting after \( (\frac{1}{4}, 0, [0], \delta_a) \), the selected fallback fraction is \( f_b = \frac{1}{4} \).  
The right boundary is \( sn_R = 0 \), hence the new sequence number is \( sn = sn_R + 1 = 1 \).  The Tracker for \( \frac{1}{4} \) updates from \((0,0)\) to \((0,1)\), and the resulting weight is \( (\frac{1}{4}, 1, [0], \delta_a) \).\\  
If another insertion occurs after \( (\frac{1}{4}, 1, [0], \delta_a) \), the new sequence number is computed as \( sn = 2 \), the Tracker becomes \((0,2)\), and the new weight is
\( (\frac{1}{4}, 2, [0], \delta_a) \). \\
\textbf{\textit{Left insertion.}} Now consider inserting before \( w_1 = (\frac{1}{4}, 0, [], \delta_a) \).  
The left boundary is \( sn_L = 0 \), so the new sequence number is 
\( sn = sn_L - 1 = -1 \).  
The Tracker updates to \((-1,2)\), producing the weight \( (\frac{1}{4}, -1, [0], \delta_a) \).\\  
If we insert again before \( (\frac{1}{4}, -1, [0], \delta_a) \), then \( sn = -2 \), the Tracker updates to \((-2,2)\), and the new weight is: \( (\frac{1}{4}, -2, [0], \delta_a) \). \\ 
\textbf{\textit{Global order.}}  
The resulting order of weights is:\\
\((\frac{1}{4}, -2, [0], \delta_a) < (\frac{1}{4}, -1, [0], \delta_a) < (\frac{1}{4}, 0, [0], \delta_a) < (\frac{1}{4}, 1, [0], \delta_a) < (\frac{1}{4}, 2, [0], \delta_a) < (\frac{2}{3}, 0, [0], \delta_a)\).\\

\noindent\textbf{\textit{Situation 3: Sequence Path Layer Insertion.}}\\
If a new element is inserted between \((\frac{1}{4}, 0, [0], \delta_a)\) and \( (\frac{1}{4}, 1, [0], \delta_a)\)
there is no integer available for the new $sn$. In this case, the algorithm uses \textsc{NewSeq} for generating a refined $sc$. For instance, a possible $sc$ path for the new element is
\(w_{\text{new}} = (\frac{1}{4}, 0, [0,5], \delta_a),\)
which correctly orders it between the two existing weights:\\
\((\frac{1}{4}, 0, [0], \delta_a) < (\frac{1}{4}, 0, [0,5], \delta_a) < (\frac{1}{4}, 1, [0], \delta_a).\) 

\section{Distributed Coordination Model} 
\label{sec:coordination_model}
In our model, we consider an asynchronous distributed collaborative editing system composed of multiple sites connected through a network. We assume that the communication between sites is reliable, and the sites may join or leave the system at any time.  Updates are disseminated using an epidemic propagation mechanism, ensuring that each modification eventually reaches all sites, either directly or through intermediate relays~\cite{crdt2}.

\subsection{Shared Document Representation}
Each replica stores the shared document as a self-balancing Red--Black tree~\cite{cormen2022}, in which nodes are ordered according to their ESBT weights. This balanced binary search tree guarantees worst-case $O(\log n)$ time complexity for insertion, deletion, and lookup operations while maintaining a balanced height regardless of the editing workload.\\
Each document node is represented as the pair
\(\langle W,C\rangle, \)  where $W$ is an ESBT weight (Definition~\ref{def:weight}) that uniquely identifies the logical position of the element, and $C$ denotes the associated content. The content may represent a character, paragraph, list item, JSON object, XML node, or any other structured element, enabling ESBT to support both linear and hierarchical collaborative data. The document is bounded by two immutable sentinel nodes
\(\langle W_{\mathrm{BEGIN}},\bot\rangle
\qquad\text{and}\qquad
\langle W_{\mathrm{END}},\top\rangle,
\) where $\bot$ and $\top$ are reserved symbols denoting the lower and upper boundaries of the document, respectively. These sentinel nodes are system-defined and are never created, modified, or deleted by user operations.\\
This representation combines the deterministic ordering of ESBT weights with the logarithmic-time guarantees of Red--Black trees, providing an efficient foundation for scalable collaborative editing. Since the ordering relation is total, the logical position of each document element is uniquely determined, and weights released after deletions can be safely reallocated without compromising uniqueness or consistency.

\subsection{Editing Operations}
Users modify the state of the shared document through two primitive editing operations:
\noindent (i) \texttt{INS}$(\omega_{new},\, e)$, which inserts a new element $e$ into the document by associating a new weight $\omega_{new}$.
\noindent (ii) \texttt{DEL}$(\omega_{exist} )$ which removes the element identified by the existing weight $(\omega_{exist} )$. Unlike tombstone-based approaches, deleted weights are released and may subsequently be reused by the ESBT allocation protocol without violating uniqueness or consistency. Multiple replicas may execute editing operations concurrently. Operations are applied locally as soon as they are generated and propagated asynchronously to the remote replicas. Since ESBT weights are globally unique and totally ordered, all replicas deterministically converge to the same document state regardless of the order in which concurrent operations are received.
The unique ESBT weight $\omega$ is generated inside the interval $W_{\mathrm{BEGIN}}$ and $W_{\mathrm{END}}$. Let "$\prec$" be a order relation between position weights. We use $\omega$, $\omega'$, $\omega_1$, $\omega_2$, $\ldots$, to denote all ESBT weights. To insert an element between two ESBT weights $\omega_1$ and $\omega_2$, such that $\omega_1 \prec \omega_2$,
simply requires generating a new weight $\omega_{new}$ such that:
$\omega_1 \prec \omega_{new} \prec \omega_2$ using \textproc{CREATE\_WEIGHT} presented in Sub-Section~\ref{GId}.

\section{ESBT Synchronization Layer} 
\label{sec:esbt_synch}
This section introduces the synchronization layer of ESBT, which coordinates updates between replicas in a distributed environment. This layer resolves consistency issues, formalizes the causal relationship between operations, and presents a lightweight control procedure to ensure consistency with minimal communication and storage overhead.

\subsection{Consistency Issues} 
\label{consistency_issue}
Even in collaborative editing systems, naïvely exchanging operations can lead to inconsistencies if causality and uniqueness are not carefully preserved. The following representative scenarios motivate the synchronization mechanism adopted by ESBT. 

\begin{figure*}[hbt!]
  \centering
  \subfigure[Causal dependency issue between insertion and deletion.]{
    \includegraphics[scale=0.45]{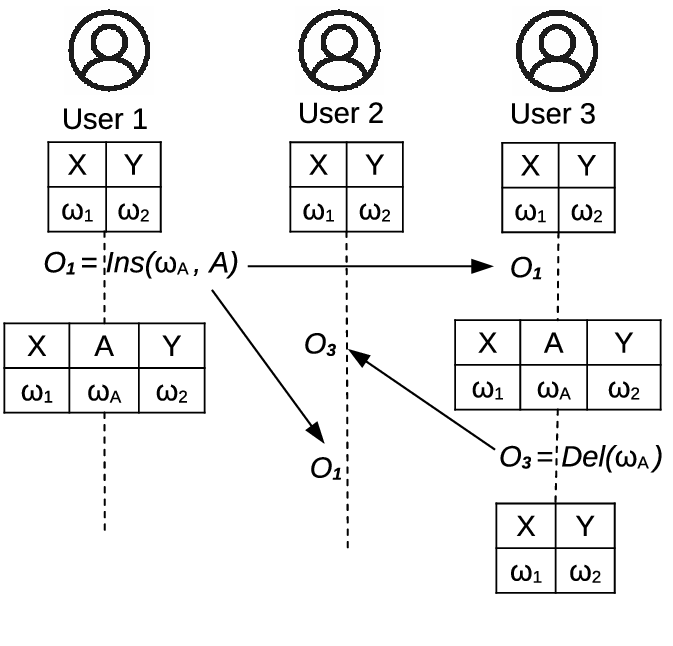}
    \label{subfig:issue1}
  }
  \quad
  \subfigure[Concurrent deletions issue of the same elements.]{
    \includegraphics[scale=0.45]{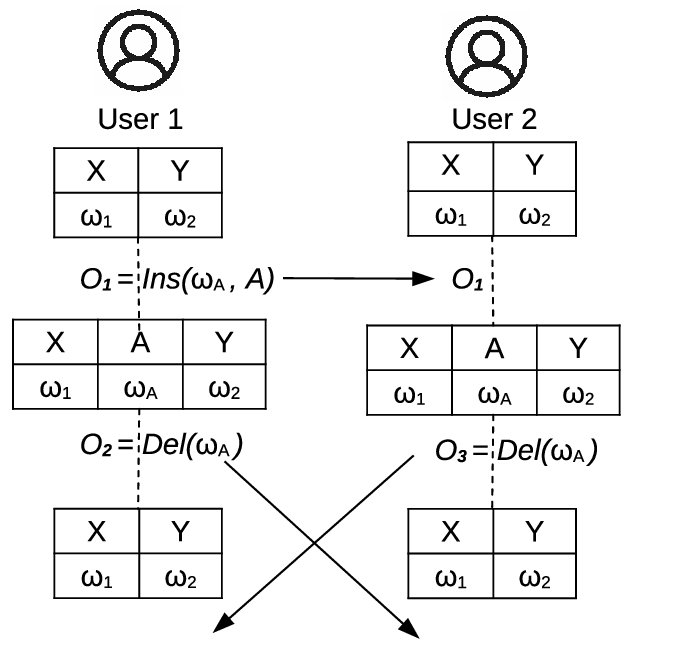}
    \label{subfig:issue2}
  }
  \quad
  \subfigure[problem of reusing the same weight for distinct elements.]{
    \includegraphics[scale=0.45]{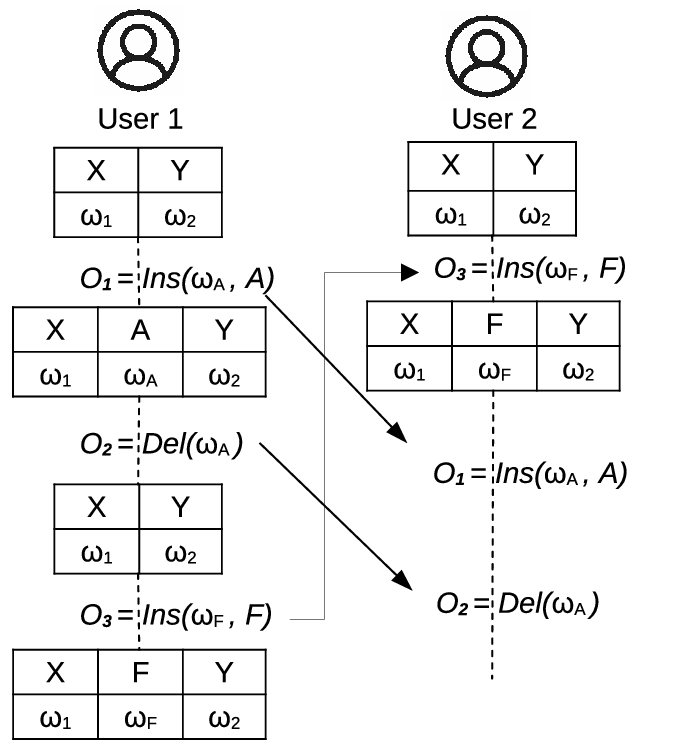}
    \label{subfig:issue3}
  }
  \caption{Consistency Issues Scenarios}
  \label{fig:exp}
\end{figure*}

\subsubsection*{Scenario 1: Causal Dependency Between Insertion and Deletion}
Consider a shared document initially containing the text "XY" shown in Figure~\ref{subfig:issue1}. User~1 inserts 'A' between 'X' and 'Y', while User~3 subsequently deletes 'A' after receiving the insertion. If another replica receives the deletion before the corresponding insertion due to message reordering, the deletion cannot be safely executed, since the referenced weight does not yet exist. Therefore, to ensure causal dependency, deletion must remain pending until its corresponding insertion has been integrated.
\[Ins(\omega_{A}, A) \;\rightarrow\; Del(\omega_{A}) \]
Therefore, User~2 delays the execution of $O_{2}$ until $O_{1}$ has been received and applied, ensuring consistent convergence of all replicas.

\subsubsection*{Scenario 2: Concurrent deletions of the same elements}
Figure~\ref{subfig:issue2} illustrates two users concurrently deleting the same element 'A'. When one deletion reaches a replica where 'A' has already been removed, executing the operation again would be redundant. To address this issue, each user maintains a local $Log$ of delete operations.
Incoming delete operations targeting previously deleted weights are recognized and safely ignored, guaranteeing idempotent deletion and replica convergence.

\subsubsection*{Scenario 3: Reinsertion at the Same Position}
Figure~\ref{subfig:issue3} illustrates a more subtle case. User~1 inserts 'A', deletes it, and later inserts 'B' at the same logical position. Because both insertions are computed from the same neighboring weights, they may receive the same weight. If replicas observe these operations in different orders, they cannot determine which insertion a deletion refers to.
To distinguish reused weights, each insertion is associated with a monotonically increasing counter $c$. Weight identifiers are therefore represented as:
\[
W = \{\,(\omega, c) \mid c \in \mathbb{N}\,\}
\]
where each pair is considered distinct according to
\[
(\omega_{1},c_{1}) \neq (\omega_{2},c_{2})
\quad \text{iff} \quad
\begin{cases}
\omega_{1} \neq \omega_{2}, & \text{or} \\
\omega_{1} = \omega_{2} \text{ and } c_{1} \neq c_{2}.
\end{cases}
\]
The counter $c$ is used only to identify successive insertions sharing the same weight; it does not affect the document ordering. Consequently, reused weights remain unique even under asynchronous message delivery.

\subsection{ESBT Causal Dependencies} 
\label{causal-dep}
ESBT is compatible with standard causal delivery protocols, including causal broadcast~\cite{causal-broadcast} and probabilistic causal broadcast~\cite{proba-broadcast}. 
These mechanisms correctly preserve causality, but often rely on replica-sized metadata, such as vector clocks, whose storage and communication costs increase with the number of participating replicas.
To avoid this overhead, ESBT associates each locally generated insertion with a lightweight \emph{counter} (denoted $c$), maintained in a small per-replica map. Every deletion operation carries the \emph{counter} of the insertion on which it depends, thereby establishing an explicit dependency between the two operations. Unlike vector clocks, this dependency information is independent of the ESBT ordering components $(f,sn,sc)$ and therefore does not affect identifier allocation or document ordering.\\
This dependency model allows replicas to correctly distinguish between successive insertions that may reuse the same ESBT weight after deletion. Consequently, deletion operations are always applied to their corresponding insertions, even in the presence of message reordering, duplication, or temporary network partitions. Since each operation carries only a single integer \emph{counter}, the synchronization overhead remains constant and independent of the number of replicas.

\subsubsection{Causal Dependency Between Operations}
To explicitly capture the causal relationship between insertions and deletions, ESBT associates every insertion with a local counter $c$. This counter uniquely identifies the insertion at a given replica and is carried unchanged by the corresponding deletion operation.

\begin{definition}[Causal Dependency]
Let \(\text{Ins}(\omega, e, c)\) denote an insertion of element \(e\) at weight \(\omega\) with counter \(c\),  
and let \(\text{Del}(\omega, c)\) denote a deletion referring the insertion with the same weight \(\omega\) and counter \(c\).\\  
We say that \(\text{Del}(\omega, c)\) is \textbf{causally dependent} on \(\text{Ins}(\omega, e, c)\), denoted by :
\[
\text{Ins}(\omega, e, c) \;\to\; \text{Del}(\omega, c)
\]
iff the deletion references the insertion identified by the same pair $(\omega,c)$.
\end{definition}

A deletion operation is applied only after its corresponding insertion has been integrated locally. If a deletion arrives before the referenced insertion due to message reordering, it is temporarily buffered until the dependency is satisfied.\\
This explicit dependency model ensures that every deletion removes exactly the intended insertion, even when ESBT weights are subsequently reused. Consequently, causal correctness is preserved without introducing replica-sized causal metadata or modifying the ESBT ordering layers.

\subsection{Lightweight Synchronization Procedure} 
\label{ccp}
Algorithm~\ref{alg:control_concurrency} presents the synchronization procedure executed independently at each replica. The protocol combines local-first execution with the ESBT identifier allocation scheme introduced in Section~\ref{GId} and the dependency model presented in Section~\ref{causal-dep}. As a result, replicas can process updates immediately while preserving deterministic convergence under asynchronous communication.

\subsubsection*{Generation of Local Operations}
Locally generated operations are executed immediately on the local replica before being propagated asynchronously to the remote participants.
For an insertion, the procedure allocates a new ESBT weight $\omega_e$ using the \textsc{Create\_Weight} algorithm and increments the local insertion counter $c$. The pair $(\omega_e,c)$ is stored in a local \textit{CounterMap}, establishing a persistent association between the inserted element and its dependency information. This association allows to perform the local insertion operation on any remote site, regardless of their reception order, thus avoiding the inconsistency illustrated in \textit{\textbf{Scenario~3}} of Sub-Section~\ref{consistency_issue}.\\ 
For deletion, the procedure retrieves the counter associated with the referenced insertion from \textit{CounterMap} (line~27) and generates the operation $\mathrm{Del}(\omega,c)$. The deletion is also stored in a local log ($L$) to avoid ambiguities when previously released weights are later reused. This log prevents the issue described in \textit{\textbf{Scenario~2}} of Sub-Section~\ref{consistency_issue}.\\ 
Finally, every generated insertion or deletion is disseminated using the underlying reliable broadcast mechanism, ensuring that all replicas eventually receive and integrate the operation.

\begin{algorithm}
\small
\caption{ESBT Control Concurrency Algorithm}
\label{alg:control_concurrency}
\begin{algorithmic}[1]

\Procedure{Main}{}
    \State \Call{Initialize}{}
    \While{running}
        \If{there is a local input $Op$}
            \State \Call{GenerateOperation}{$Op$}
        \Else
            \State \Call{ReceiveOperation}{}
            \State \Call{IntegrateRemoteOperation}{}
        \EndIf
    \EndWhile
\EndProcedure

\Procedure{Initialize}{}
    \State $Q \gets [\ ]$  \Comment{Queue of pending operations}
    \State $L \gets [\ ]$  \Comment{Delete Log}
    \State $S \gets S_0$   \Comment{Document state}
    \State $CounterMap \gets \{\}$ \Comment{Mapping: weight → counter}
    \State $Counter \gets 0$
    \State $\delta \gets$ local site identifier
\EndProcedure

\Procedure{GenerateOperation}{$Op$}
    \If{$Op$ is \textsc{Ins} between $\omega_i$ and $\omega_{i+1}$}
        \State $\omega_e \gets$ \Call{Create\_Weight}{$\omega_i, \omega_{i+1}$}
        \State $Counter \gets Counter + 1$
        \State $CounterMap[\omega_e] \gets Counter$  
        \State $Op \gets \textsc{Ins}(\omega_e, e, Counter)$
    \Else
        \State $Op \gets \textsc{Del}(\omega_e, CounterMap[\omega_e])$
        \State $L \gets L + Op$  
    \EndIf
    \State $S \gets$ \Call{Apply}{$Op, S$} 
    \State Broadcast $Op$ to other users
\EndProcedure

\Procedure{\textproc{ReceiveOperation}}{}
    \If{there is an operation $Op$ from the network}
        \State $Q \gets Q + Op$
    \EndIf
\EndProcedure

\Procedure{\textproc{IntegrateRemoteOperation}}{}
    \ForAll{$Op \in Q$ such that \Call{IsCausallyReady}{$Op$}}
        \State $Q \gets Q - Op$
        \State $S \gets$ \Call{Apply}{$Op, S$}
        \If{$Op$ is \textsc{Del}}
            \State $L \gets L + Op$  
        \EndIf
    \EndFor
\EndProcedure

\Function{\textproc{IsCausallyReady}}{$Op$}
    \If{$Op$ is \textsc{Del}($\omega_e, c$)}
        \If{$\omega_e \in S$} 
            \State \Return \textbf{true} \Comment{Insertion exists: Apply}
        \ElsIf{$(\omega_e, c) \in L$}
            \State \Return \textbf{false} \Comment{Already deleted: Ignore}
        \Else
            \State \Return \textbf{false} \Comment{Insertion not yet arrived: Wait}
        \EndIf
    \Else
        \State \Return \textbf{true} \Comment{Insertions are always Ready}
    \EndIf
\EndFunction

\end{algorithmic}
\end{algorithm}
\medskip

\subsubsection*{Integration of Remote Operations}
Each replica maintains a pending queue $Q$ that contains remote operations whose causal dependencies have not yet been satisfied. Upon reception, an operation generated at another replica is inserted into $Q$ and processed according to its type. 
Insertion operations are immediately integrated into the local document since they have no unresolved dependencies. In contrast, a deletion operation $\mathrm{Del}(\omega,c)$ is applied only if its corresponding insertion, identified by the pair $(\omega,c)$, has already been integrated locally. Otherwise, the deletion remains buffered in $Q$ until the required insertion is received, thus avoiding the issue illustrated in \textit{\textbf{Scenario~1}} of Sub-Section~\ref{consistency_issue}. \\
Whenever an operation becomes causally ready, the procedure \textsc{Apply}$(Op,S)$ is invoked to update the local document state $S$. This dependency-driven integration guarantees that every deletion is applied to its intended insertion despite message reordering, duplication, or temporary network partitions, while avoiding replica-sized causal metadata.

\section{Formal Model of ESBT}
\label{sec:formal}
To reason about correctness and convergence, we formalize the Extended Stern–Brocot Tree as an ordered set of \emph{weights}. 
We state a series of theorems showing that ESBT preserves a strict total order, generates globally unique weights, and converges under arbitrary operation delivery.  

\begin{theorem}[Mediant Constraint]\label{th:mediant}
Let \(D_{\max} \in \mathbb{N}\) be the predefined bound for both numerator and denominator.
For any attempted insertion between
\(W_L = (f_L, sn_L, sc_L, \delta_L)\) and
\(W_R = (f_R, sn_R, sc_R, \delta_R)\),
compute the mediant
\[
f^{\ast} = \frac{p_L + p_R}{q_L + q_R},
\qquad f_L = \frac{p_L}{q_L},\; f_R = \frac{p_R}{q_R}.
\]
If \(\max(\mathrm{num}(f^{\ast}),\mathrm{den}(f^{\ast})) \le D_{\max}\),
assign \(f^{\ast}\) as the fraction for the new weight.
Otherwise, ESBT handles this by:
\begin{enumerate}
  \item Allocating an unused sequence number \(sn\) from the site’s tracker,
  \item On concurrent collisions, generating a finite sequence path
        \(sc = \textsc{NewSeq}(sc_L, sc_R, base, DEPTH)\).
\end{enumerate}
Consequently,
\[ \exists!\, W = (f, sn, sc, \delta)\;\;\text{such that}\;\; W_L < W < W_R, \] 
while the numerator and denominator of the fraction component remain bounded by $D_{\max}$.
\end{theorem}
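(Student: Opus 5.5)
The plan is to argue by case analysis over the three branches of \textsc{Create\_Weight} (Algorithm~\ref{alg:create-esbt-weight}). For each branch I will establish three things: \emph{(a)} a weight is returned, \emph{(b)} that weight satisfies $W_L < W < W_R$ under Definition~\ref{def:lexorder}, and \emph{(c)} the fraction component has numerator and denominator bounded by $D_{\max}$. The boundedness claim will be proved as an invariant by induction on the number of allocations performed at all replicas. The base case is the sentinels and the root $\tfrac{1}{1}$; the sentinels are treated as fixed boundary objects that are excluded from the bound. In the inductive step, a newly allocated fraction is either the reduced mediant $f^{\ast}$, which is accepted only when it satisfies the $D_{\max}$ test (Lemma~1, the Bounded Fraction Property), or the fallback fraction $f_b \in \{f_L, f_R\}$. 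The latter is already present in the document and is non-sentinel by the choice made in lines 13--16, so it is bounded by the induction hypothesis.

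For strict betweenness, I will take each layer in turn.
\begin{enumerate}
\item \textbf{Mediant case.} This uses the classical Stern--Brocot fact that $\tfrac{p_L}{q_L} < \tfrac{p_L+p_R}{q_L+q_R} < \tfrac{p_R}{q_R}$ whenever $p_Lq_R < p_Rq_L$. The proof is the cross-multiplication identity $(p_L+p_R)q_L - p_L(q_L+q_R) = p_Rq_L - p_Lq_R > 0$, and symmetrically on the other side. Since the first component decides the order, the remaining components $(sn, sc, \delta)$ are irrelevant here.
\item \textbf{Sequence number case.} When $f_b = f_L < f_R$, a right allocation produces $(f_L, sn_R+1, \cdot)$. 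I need $sn_R + 1 > sn$ for every weight already carrying fraction $f_L$ at this site. This follows because the Tracker records the extreme $sn$ values allocated for $f_b$ (Definition~\ref{def:tracker}), and because $f_L < f_R$ settles the upper side immediately. When $f_b = f_R$, the guard conditions in lines 19 and 24 give the required inequalities against $\omega_2.sn$ directly. The left allocation is handled symmetrically.
\item \textbf{Sequence path case.} Here $f$ and $sn$ coincide with those of the left neighbour, so it suffices to show $sc_L \prec \textsc{NewSeq}(sc_L, sc_R, base, DEPTH) \prec sc_R$. I will prove this by induction on $depth$, with the invariant that the accumulated prefix equals the common prefix of $sc_L$ and $sc_R$, using the padding conventions $lv = 0$ and $rv = base$.
\end{enumerate}

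For the ``$\exists!$'' part, I will read the statement as saying that the allocator deterministically produces exactly one weight: for fixed inputs $(W_L, W_R)$, the Tracker state and $\delta$ determine the output uniquely. Global uniqueness across sites then comes from the $\delta$ component acting as the final tie-breaker, combined with the Uniqueness Theorem stated earlier. Literal uniqueness of a weight lying between $W_L$ and $W_R$ is false by density, so the proof will make this reading explicit.

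I expect the main obstacle to be the \textsc{NewSeq} step. Once a gap exists, the midpoint $lv + \lfloor interval/2 \rfloor + 1$ lies in $(lv, rv)$, and this is routine. The delicate part is the exhausted case, where the algorithm appends $lv$ without any gap and eventually appends the site-dependent digit $1 + (siteId \bmod (base-1))$. My first attempt will be to show that once the prefix has followed $sc_L$ past its length, the left padding is $0$, so any digit $\ge 1$ puts the result strictly above $sc_L$. For the right side I will argue that $rv = base$ forces $interval \ge base - 1 > 0$ at such depths, so the fallback is only reached while both paths are still being read. If that argument fails for some configuration of $sc_L$ and $sc_R$, I would instead fall back on the $\delta$ component of Definition~\ref{def:lexorder} to separate the two weights.
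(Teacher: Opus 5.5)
Your outline is the paper's own three-layer case split, and the cross-multiplication step is fine when $f_L<f_R$. However, the mediant case silently assumes $f_L<f_R$. Adjacent weights only satisfy $f_L\le f_R$ under Definition~\ref{def:lexorder}, and equal fractions are exactly what the $sn$/$sc$ layers and concurrent allocations produce. If $f_L=f_R=\tfrac{p}{q}$, the reduced mediant $\tfrac{2p}{2q}$ is $\tfrac{p}{q}$ again. It passes the test in line~10 of Algorithm~\ref{alg:create-esbt-weight} because $f_L$ itself does (by your own invariant), so line~11 returns $(f_L,0,[0],\delta)$ and the $sn$ and $sc$ layers are never reached. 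The paper's Situation~3 is an instance: between $(\tfrac{1}{4},0,[0],\delta_a)$ and $(\tfrac{1}{4},1,[0],\delta_a)$ with $D_{\max}=5$, the mediant reduces to $\tfrac{1}{4}$ and is accepted, and at site $\delta_a$ the output equals $W_L$.

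Worse, the existence claim is false when $W_L$ and $W_R$ differ only in $\delta$, which ordinary concurrency produces. Two sites $\delta_1<\delta_2$ inserting concurrently between the same neighbours with an accepted mediant $f^{\ast}$ obtain $(f^{\ast},0,[0],\delta_1)<(f^{\ast},0,[0],\delta_2)$. Any $W$ strictly between them must agree with both on $f$, $sn$ and $sc$, since all are compared before $\delta$, so it needs $\delta_1<\delta<\delta_2$. No such $W$ exists for consecutive identifiers, and the allocator, which always stamps its own $\delta$, cannot produce one at either site. So your fallback ``on the $\delta$ component'' cannot rescue anything. Your target $sc_L\prec\textsc{NewSeq}(sc_L,sc_R,base,DEPTH)\prec sc_R$ is also unattainable when $sc_L=sc_R$, since nothing lies strictly between a path and itself. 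The statement as written cannot be proved; you need an extra hypothesis (e.g.\ $W_L$ and $W_R$ differ in $(f,sn,sc)$) and an allocator that skips the mediant branch when $f_L=f_R$.

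Two further steps need repair. First, in the $sn$ case the Tracker is local to each site (Definition~\ref{def:tracker}). Knowing the extreme $sn$ values allocated for $f_b$ \emph{at this site} says nothing about a remotely allocated $W_L$. If site $B$ has produced $(f,2,[0],\delta_B)$ by two right allocations, and site $A$, whose tracker for $f$ is still $(0,0)$, inserts right after it with a rejected mediant, then $A$ returns $(f,1,[0],\delta_A)<W_L$. Also, the guard in line~19 only yields $sn_R+1\le\omega_2.sn$. In the equality case the comparison with $\omega_2$ falls to $sc$ and $\delta$, so the guard does not give the upper bound ``directly''.

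Second, for boundedness, Lemma~1 and line~10 use a \emph{disjunctive} test ($p$ or $q$ below $D_{\max}$), which bounds only one of the two. Repeated insertions at the beginning produce $\tfrac{1}{2},\tfrac{1}{3},\tfrac{1}{4},\dots$, all accepted because the numerator is $1$. Your induction therefore works only for the conjunctive test $\max(\mathrm{num},\mathrm{den})\le D_{\max}$ written in the theorem. You must say explicitly that this, not Algorithm~\ref{alg:create-esbt-weight}, is the allocator you analyse.

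The paper's sketch covers all of these points with the single assertion that ``each refinement preserves the ordering relation''. The honest repair is to restate the theorem with these hypotheses and the corrected allocator, rather than to prove it as written.
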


\noindent\textbf{Proof Sketch.}
If the mediant fraction satisfies the bound $D_{\max}$, it is uniquely determined by the Stern--Brocot tree and lies strictly between the neighboring fractions. Otherwise, ESBT deterministically refines the ordering using the sequence number $sn$ and, if necessary, the sequence path $sc$. Since each refinement preserves the ordering relation, a unique weight is always generated without exceeding the predefined fraction bound.

\begin{theorem}[Strict Total Order]\label{th:order}
The relation \(<\) from Definition~\ref{def:lexorder} forms a strict total order
on the set of ESBT weights.
\end{theorem}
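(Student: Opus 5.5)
The plan is to show that the relation $<$ of Definition~\ref{def:lexorder} is the lexicographic product of four strict total orders, one on each component of a weight, and that such a product is again a strict total order. Concretely, I treat a weight $W=\langle f,sn,sc,\delta\rangle$ as an element of $\mathbb{Q}^+_{\ge 0}\cup\{\tfrac{1}{0}\}\times\mathbb{Z}\times\mathcal{S}\times\Delta$. Here $\mathcal{S}$ is the set of finite integer sequences and $\Delta$ is the set of site identifiers extended with the sentinel symbol $\emptyset$. I then check irreflexivity, transitivity and trichotomy (connexity) for $<$ directly from the four-case definition.

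The first step is to establish that each component order is a strict total order.

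\begin{itemize}
\item For fractions, I use the cross-multiplication criterion $\frac{p_1}{q_1}<\frac{p_2}{q_2}\Leftrightarrow p_1q_2<p_2q_1$. I extend it to the sentinels $\tfrac01$ and $\tfrac10$, which the same rule handles since $q\ge 0$. Because all fractions are irreducible, equality of values coincides with equality of the pairs $(p,q)$. This matters: ``$f_i=f_j$'' in Definition~\ref{def:lexorder} must mean syntactic equality of the components for the final case analysis to be exhaustive.
\item The order on $sn\in\mathbb{Z}$ is the usual one.
\item For $\delta$, I fix any total order on site identifiers, placing $\emptyset$ as the least element, for instance.
\item For $sc$, the order $\prec$ of Definition~\ref{def:scorder} is the delicate one.
\end{itemize}

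The main obstacle is that, as written, Definition~\ref{def:scorder} only compares sequences at a position $d$ where both have a digit. Two sequences of which one is a proper prefix of the other, such as $[0]$ and $[0,5]$ in Situation~3, would therefore be incomparable yet distinct, and trichotomy would fail. I would resolve this by adopting the standard convention, consistent with the example $[0]\prec[0,5]$ and with \textsc{newSeq} reading missing left digits as $0$: a proper prefix precedes its extensions. Equivalently, I define $sc_i\prec sc_j$ iff either the first differing position $d$ has $c_{i,d}<c_{j,d}$, or $sc_i$ is a proper prefix of $sc_j$. With this convention I prove that $\prec$ is a strict total order on finite sequences by induction on the length of the common prefix:
\begin{itemize}
\item \emph{Irreflexivity} holds because no differing position exists and no sequence is a proper prefix of itself.
\item \emph{Trichotomy} follows by locating the first differing index, or by noting that one sequence is a prefix of the other.
\item \emph{Transitivity} is checked by comparing the first-difference indices $d_1$ (for $sc_i,sc_j$) and $d_2$ (for $sc_j,sc_k$). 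Taking $\min(d_1,d_2)$ gives the first difference between $sc_i$ and $sc_k$, with the correct sign, and the prefix subcases are handled the same way.
\end{itemize}

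Finally, I prove the general lemma for the lexicographic product of strict total orders $(A_1,<_1),\dots,(A_4,<_4)$. For \emph{irreflexivity}, note that $W<W$ would require some component to satisfy $x<_kx$. For \emph{transitivity}, suppose $W_i<W_j$ is decided at level $k$ and $W_j<W_l$ at level $m$. Then all components below $\min(k,m)$ agree across the three weights, and at level $\min(k,m)$ transitivity of $<_{\min(k,m)}$ holds, or one side is an equality and the other is strict, so $W_i<W_l$ is decided there. For \emph{trichotomy}, take the first component where $W_i$ and $W_j$ differ; totality of that component order gives $W_i<W_j$ or $W_j<W_i$, and if no such component exists then $W_i=W_j$. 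Applying the lemma to the four component orders established above yields Theorem~\ref{th:order}.
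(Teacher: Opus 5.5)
Your proof is correct and follows the same route as the paper's sketch, which also argues that $<$ is the hierarchical (lexicographic) combination of total orders on $f$, $sn$, $sc$ and $\delta$. You fill in what the sketch leaves implicit: that Definition~\ref{def:scorder}, read literally, leaves a sequence and its proper extensions (e.g.\ $[0]$ and $[0,5]$) incomparable, so the prefix-precedes convention is needed for trichotomy; that irreducibility makes ``$f_i=f_j$'' coincide with equality of components; and an explicit check that a lexicographic product of strict total orders is again a strict total order.
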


\noindent\textbf{Proof Sketch.}
The ordering relation compares weights hierarchically according to $(f,sn,sc,\delta)$. Each component is itself totally ordered: fractions by rational order, sequence numbers by integer order, sequence paths lexicographically, and site identifiers deterministically. Consequently, every pair of distinct weights is comparable, establishing a strict total order.

\begin{theorem}[Weight Uniqueness]\label{th:unique}
Every insertion operation generates a unique ESBT weight.
No two distinct insertions can produce the same
\((f,sn,sc,\delta)\).
\end{theorem}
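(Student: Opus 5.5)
We split on the origin of the two insertions and then on the layer of \textsc{Create\_Weight} (Algorithm~\ref{alg:create-esbt-weight}) that produced each weight. Let $I_1$ and $I_2$ be distinct insertions producing $W_1=\langle f_1,sn_1,sc_1,\delta_1\rangle$ and $W_2=\langle f_2,sn_2,sc_2,\delta_2\rangle$. The governing idea is that the last component $\delta$ is the site identifier of the generating replica, and identifiers are pairwise distinct across sites.

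\emph{Cross-site case.} If $I_1$ and $I_2$ were generated at different sites, then $\delta_1\neq\delta_2$. Since $\delta$ is a component of the weight, $W_1\neq W_2$ follows immediately, regardless of how the earlier components were produced. Concurrency therefore needs no special handling in this case.

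\emph{Same-site case.} Here $\delta_1=\delta_2$, and we argue by the sequential history of the single replica. We assume the insertions are not separated by a deletion freeing the same slot; that situation is exactly Scenario~3, where the weight is reused on purpose and the pair $(\omega,c)$ provides the distinction.

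We then split on the branch of \textsc{Create\_Weight} taken by the later insertion, say $I_2$.

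\begin{itemize}
\item \emph{Fraction layer.} $W_2$ has a reduced mediant $f_2$ lying strictly between two currently adjacent live fractions. By the Stern--Brocot property that every reduced rational is produced exactly once by mediants of its neighbours, $f_2$ cannot equal any live fraction. In particular it differs from $f_1$ whenever $W_1$ is still present.
\item \emph{Sequence-number layer.} $sn$ is drawn from the site Tracker for $f_b$ as $sn_R+1$ or $sn_L-1$, and the Tracker entry is updated. Since $\mathcal{T}(f_b)$ is monotone in both directions, no value of $sn$ is issued twice under the same $f_b$ at that site.
\item \emph{Sequence-path layer.} By the case analysis of Algorithm~\ref{alg:newseq}, \textsc{NewSeq} returns $sc$ with $sc_L\prec sc\prec sc_R$ under Definition~\ref{def:scorder}. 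In the branch where a gap exists, the midpoint lies strictly inside $(lv,rv)$. In the branch where no gap exists, the returned path extends the prefix of $sc_L$ with a positive digit. By Theorem~\ref{th:mediant} and Theorem~\ref{th:order}, the output then differs from both neighbours, and hence from every existing weight of the form $(f_b,sn,\cdot)$.
\end{itemize}

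\emph{Main obstacle.} The delicate step is the depth-exhaustion fallback of \textsc{NewSeq}, together with the mixing of layers. The tie digit $1+(siteId \bmod (base-1))$ is only injective on sites modulo $base-1$, so within a single site we must show the appended path still lies strictly between its neighbours. We would first try to prove that appending any positive digit to a common prefix equal to $sc_L$ yields a strict successor of $sc_L$ that remains below $sc_R$ by the first-difference rule. The remaining collisions, for example two identical paths at the same site, must then be excluded by the adjacency invariant: a newly created weight becomes the new neighbour, so the next call receives different inputs.
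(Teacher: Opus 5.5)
Your cross-site case is correct, and it is the one substantive ingredient of the paper's sketch, which simply invokes $\delta$ as the final tie-breaker. Your exclusion of Scenario~3 is necessary rather than a weakness: the paper itself admits that a released weight can be regenerated, so the statement can only hold for coexisting elements. The gap is the same-site case, where the per-branch steps you propose do not go through for Algorithms~\ref{alg:create-esbt-weight} and~\ref{alg:newseq} as written.

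\textbf{Fraction layer.} The Stern--Brocot ``each rational appears once'' property concerns mediants of Farey neighbours, and adjacent ESBT fractions need not be Farey neighbours (e.g.\ $\frac14,\frac23$). The valid argument is adjacency plus the primacy of $f$ in Definition~\ref{def:lexorder}, and it needs $f_L<f_R$. But the $sn$ and $sc$ layers create neighbours with $f_L=f_R$. Their reduced mediant is $f_L$ itself, which passes the $D_{\max}$ test because every non-sentinel fraction was admitted by that same test. So line~11 returns $\langle f_L,0,[0],\delta\rangle$. In the paper's own running example ($D_{\max}=5$, site $\delta_a$, neighbours $(\frac14,0,[0],\delta_a)$ and $(\frac14,1,[0],\delta_a)$), this is the left neighbour itself, not the \textsc{NewSeq} output claimed in Situation~3.

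\textbf{Sequence-number layer.} Your $sn$ bullet only compares tracker-issued values. Case~3 instead copies $\omega_1.sn$, which may have been issued by another site's tracker and can later be issued by yours. Case~2 copies $\omega_1.sc$. Hence two same-site weights produced by different branches are not separated by any of your per-branch arguments.

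\textbf{Sequence-path fallback.} Suppose you repair the first problem by bypassing the mediant branch when $f_L=f_R$, as Situation~3 intends. The \textsc{NewSeq} fallback still fails exactly where you planned to patch it. Its output is the first $\mathit{DEPTH}$ digits of $sc_L$ (zero-padded), followed by a digit that depends only on the site. So your ``common prefix equal to $sc_L$'' holds only when $|sc_L|\le\mathit{DEPTH}$, and the map is not injective in its left argument.

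Suppose $\omega_1$ was itself produced by the fallback at this site between some $X$ and $\omega_2$. Then $|sc_1|=\mathit{DEPTH}+1$, and $sc_1$ agrees with $X.sc$ on the first $\mathit{DEPTH}$ digits. Now insert at this site between $\omega_1$ and $\omega_2$. The insertion is routed to Case~3 again, since Case~3 leaves the tracker untouched and $f_b$ and $\omega_2$ are unchanged. It meets the same digits, finds no gap, and appends the same tie, so it returns $sc_1$. The new weight $\langle \omega_1.f,\omega_1.sn,sc_1,\delta\rangle$ is then $\omega_1$. Concretely, with $base=10$, $\mathit{DEPTH}=3$, $siteId=2$, both \textsc{NewSeq}$([3],[4,0,1])$ and \textsc{NewSeq}$([3,0,0,3],[4,0,1])$ return $[3,0,0,3]$.

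So ``the next call receives different inputs'' does not give a different output. Citing Theorem~\ref{th:mediant} there is also circular, since that theorem just asserts the in-between weight you are trying to construct. Even the benign extension case needs a zero-padding convention that Definition~\ref{def:scorder} does not state; under that definition a proper extension has no first differing position.

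The same-site half therefore cannot be completed without amending the algorithms or adding hypotheses that exclude these configurations. The paper's sketch, which leans only on $\delta$, does not address them either.
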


\noindent\textbf{Proof Sketch.}
The fraction component uniquely identifies a position whenever a new mediant can be allocated. Concurrent fraction collisions are resolved using the sequence number, followed by the sequence path when necessary, while the site identifier provides a final deterministic tie-breaker. Therefore, no two insertions can generate identical ESBT weights.

\begin{theorem}[Convergence]\label{th:convergence}
Given a finite set of insertion operations delivered to all replicas, ESBT ensures every replica converges to the same sequence of weights ordered by Definition~\ref{def:lexorder}.
\end{theorem}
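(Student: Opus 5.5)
To prove Theorem~\ref{th:convergence}, I would show that every replica's state after integrating the finite set of insertions is a function of that \emph{set} alone, not of the order of delivery. The argument is the standard state-based one. First fix the finite set $\mathcal{I}=\{\mathrm{Ins}(\omega_1,e_1,c_1),\dots,\mathrm{Ins}(\omega_n,e_n,c_n)\}$ of insertions generated during the execution. By the reliable epidemic dissemination assumed in Section~\ref{sec:coordination_model}, every replica eventually receives every element of $\mathcal{I}$. By \textsc{IsCausallyReady} in Algorithm~\ref{alg:control_concurrency}, insertions are always ready, so each one is eventually applied exactly once. I would treat duplicates by observing that re-inserting an already present node with identical weight is a no-op in the Red--Black tree keyed by weight.

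The central step is to model the document state $S$ abstractly as the finite set of pairs $\langle W,C\rangle$ stored in the Red--Black tree, together with the two sentinels. I would then show that $\textsc{Apply}(\mathrm{Ins}(\omega,e,c),\cdot)$ acts on this abstract state as set union $S\mapsto S\cup\{\langle\omega,e\rangle\}$. Two facts support this. First, by Theorem~\ref{th:unique}, distinct insertions carry distinct weights, so no insertion overwrites another node. Second, Red--Black insertion keyed by a strict total order yields a tree whose in-order traversal is the sorted order of its key set, independent of the insertion history; only the internal shape, not the sequence, depends on history. Because set union is commutative, associative and idempotent, any two delivery orders of $\mathcal{I}$ produce the same abstract set $S_0\cup\{\langle\omega_i,e_i\rangle\}_{i}$.

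The final step reads the document off this set. By Theorem~\ref{th:order}, the relation $<$ of Definition~\ref{def:lexorder} is a strict total order, so the finite set of weights admits exactly one increasing enumeration $W_{\mathrm{BEGIN}}<\omega_{\pi(1)}<\dots<\omega_{\pi(n)}<W_{\mathrm{END}}$. This enumeration is exactly the in-order traversal at every replica. The order is replica-invariant because comparison uses only the immutable components $(f,sn,sc,\delta)$ carried inside the operation, with no local state. Hence all replicas expose the same sequence of weights, which is the claim.

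I expect the main obstacle to be the justification of Theorem~\ref{th:unique} in the presence of weight reuse after deletion and site-local trackers. Two sites could reach identical $(f,sn,sc)$; they are then separated only by $\delta$, and the argument needs $\delta$ to be globally distinct, which I would state as an explicit assumption on site identifiers. For a pure-insertion execution there is no reuse, so I would first restrict to that case, which is what the theorem states. If uniqueness of $(\omega,c)$ is needed instead, I would key the abstract set on $(\omega,c)$ and note that the counter does not influence order, so the projected sequence of weights is unchanged.
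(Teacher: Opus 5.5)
Your proposal is correct and follows the same route as the paper's proof sketch: every replica ends with the same finite set of weights and sorts it by the same strict total order of Definition~\ref{def:lexorder}. You make explicit what the sketch leaves implicit, namely the set-union semantics of \textsc{Apply}, the in-order property of the Red--Black tree, and the fact that weights travel inside the operations, so the paper's appeal to ``deterministic generation'' is not needed. One simplification: Theorem~\ref{th:unique} is needed only for the element contents to converge, not for the sequence of weights the statement claims, because even colliding insertions would leave every replica with the same key set (the same multiset of keys, if the tree stores duplicates), and this removes the obstacle you anticipate.
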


\noindent\textbf{Proof Sketch.}
Weight generation is deterministic and independent of message delivery order. Since every replica eventually receives the same set of operations and applies the same total ordering relation, all replicas construct an identical ordered sequence. Therefore, ESBT guarantees deterministic convergence and satisfies Strong Eventual Consistency.

Theorems~\ref{th:mediant} and \ref{th:convergence} establish the fundamental correctness properties of ESBT. Theorem~\ref{th:mediant} guarantees bounded fraction allocation through hierarchical disambiguation, Theorem~\ref{th:order} proves that ESBT weights form a strict total order, Theorem~\ref{th:unique} ensures global weight uniqueness, and Theorem\ref{th:convergence} establishes deterministic convergence across replicas. Together, these properties provide the formal foundation for ESBT and ensure Strong Eventual Consistency in distributed collaborative editing.
\section{Asymptotic Complexity}
\label{sec:asymptotic}
The computational cost of ESBT can be analyzed by distinguishing identifier allocation from document integration.

\paragraph{Local Weight Allocation.}
Generating a new ESBT weight consists of computing a mediant fraction between two adjacent identifiers and, when the bound $D_{\max}$ is reached, activating the successive disambiguation layers ($sn$, $sc$, and finally $\delta$). Since each allocation performs a bounded number of arithmetic and comparison operations per refinement level, the identifier generation cost grows logarithmically with the search depth. Moreover, the bounded allocation policy limits the size of each identifier component, yielding constant memory overhead per weight.

\paragraph{Document Integration.}
The shared document is indexed by a Red--Black tree ordered according to ESBT weights. Consequently, searching for an existing weight, inserting a new element, and deleting an existing element all require \(O(\log n),\) where $n$ denotes the number of elements of the document . Therefore, both locally generated and remotely received operations are integrated in logarithmic time.
In general, ESBT combines bounded identifier allocation with logarithmic-time document operations, making it suitable for large-scale collaborative editing with high concurrency.

\section{Performance Evaluation}
\label{sec:performance}
This section evaluates the proposed ESBT approach from two complementary perspectives: (i) the influence of its allocator parameters on identifier growth and memory consumption, and (ii) its performance relative to representative sequence CRDTs under standard collaborative editing workloads.\\
The first experiment investigates the impact of the ESBT allocator parameters (\textit{base} and \textit{depth}) on identifier size and memory usage. The second experiment compares ESBT with two representative variable-size identifier sequence CRDTs, namely Logoot~\cite{logoot} and LSEQ~\cite{NedelecMMD13}, using the Boundary, Random, and Mixed allocation strategies. The evaluation considers four representative insertion workloads: beginning, end, middle, and random.\\
All experiments were executed on a machine with
Processor 12$^{th}$ Gen Intel® Core™ i9-12900K × 24,  OS Name Ubuntu 22.04.5 LTS, and Memory 32.0 GiB. All algorithms were implemented in Java.

\subsection{Impact of ESBT Allocator Parameters}
This experiment evaluates the influence of the ESBT allocator parameters on weight growth and memory consumption. Since the sequence path ($sc$) is the only variable-length component of an ESBT weight, the evaluation focuses on the behavior of the \textsc{NewSeq} function under the worst-case insertion pattern, namely repeated middle insertions. This workload maximizes identifier refinement by repeatedly extending the sequence path and therefore provides an upper bound on identifier growth.

\begin{table}[ht]
	\centering
	\caption{ESBT weight size (in MB) under varying base configurations and worst-case insertion workload.}
	\begin{tabular}{lrrr}
		\hline
		\textbf{Base} & \textbf{1\,000 ops} & \textbf{10\,000 ops} & \textbf{100\,000 ops} \\
		\hline
		$2^{8}$   & 0.490 & 49.900 & 3202.320 \\
		$2^{10}$  & 0.399 & 39.992 & 3999.920 \\
		$2^{16}$  & 0.262 & 25.129 & 2501.299 \\
		$2^{31}-1$ & 0.152 & 13.144 & 1292.735 \\
		\hline
	\end{tabular}
	\label{tab:esbt_middle_insertion}
\end{table}

\subsubsection{Experiment Setup}
The evaluation consists of two phases, each comprising up to 100\,000 insertion operations.
In the first phase, four branching factors (bases) were evaluated: \(\{2^{8},\,2^{10},\,2^{16},\,2^{31}-1\}.\)
These values span small, medium, and large allocation spaces and were used to measure the total memory footprint after 1\,000, 10\,000, and 100\,000 insertions. 
During this phase, the maximum depth was fixed to $2^{16}$ in order to isolate the influence of the branching factor.\\
In the second phase, the best-performing branching factor identified in Table~\ref{tab:esbt_middle_insertion} was retained while varying the maximum depth. The objective was to determine the smallest depth that minimizes the sequence-path length without compromising scalability or identifier allocation.

\subsubsection{Results}
\noindent\textbf{\textit{Phase 1}}.
Table~\ref{tab:esbt_middle_insertion} reports the memory consumption of ESBT under the worst-case middle-insertion workload for different branching factors (bases). The results show a clear inverse relationship between the base value and memory usage. Increasing the base enlarges the available allocation space between adjacent weights, thereby reducing the frequency of sequence path ($sc$) extensions.\\
After 100\,000 insertions, the configuration with base $2^{31}-1$ required 1\,293~MB, compared with 3\,202~MB for base $2^{8}$, corresponding to a memory reduction of approximately 59.6\%. The intermediate configurations ($2^{10}$ and $2^{16}$) exhibit the same trend, indicating that larger branching factors produce shallower identifiers and more compact memory representations. 
Based on these observations, the remaining experiments use a branching factor of $2^{31}-1$, which provides the lowest memory consumption among the evaluated configurations.

\begin{table}[ht]
	\centering
	\caption{ESBT weight sequence path ($sc$) depth length under different insertion operation count.}
	\begin{tabular}{lcc}
		\hline
		\textbf{Operations} & \textbf{Average depth} & \textbf{Max depth} \\
		\hline
		\textbf{1\,000}   & 16.22 & 32  \\
		\textbf{10\,000}  & 161.37 & 323 \\
		\textbf{100\,000} & 1612.98 & 3226  \\
		\hline
	\end{tabular}
	\label{tab:esbt_middle_depth}
\end{table}

\noindent\textbf{\textit{Phase 2}}.
Table~\ref{tab:esbt_middle_depth} reports the effect of the maximum depth parameter on the length of the sequence path ($sc$). The results indicate that the average path length increases much more slowly than the number of insertions. Specifically, the average path length grows from 16.22 after 1\,000 insertions to 1\,612.98 after 100\,000 insertions, while the maximum observed path length reaches 3\,226 under the worst-case workload.\\
These results show that deep sequence paths occur only under extreme repeated insertions at the same logical position, whereas most identifiers remain relatively shallow in practice. Although the theoretical maximum depth observed during the experiment is 3\,226, a depth limit of 256 proved sufficient for all representative collaborative workloads considered in this study.
Consequently, the remaining experiments adopt the following parameter configuration:\\
\emph{base} = $2^{31}-1$ and \emph{depth} = 256,
which provides a favorable trade-off between memory consumption, identifier compactness, and allocation efficiency.
\subsection{ESBT versus Baseline Sequence CRDTs}
\label{EsbtvsCrdts}
This experiment compares the proposed ESBT with representative variable-size identifier sequence CRDTs, namely Logoot~\cite{logoot} and LSEQ~\cite{NedelecMMD13, NedelecMM21} using its Boundary, Random, and Mixed allocation strategies. The objective is to evaluate both execution efficiency and identifier compactness under identical collaborative editing workloads. Two performance metrics are considered:
\begin{itemize}
\item \textbf{Responsiveness}, measured as the cumulative time required for identifier generation and operation integration (ms).
\item \textbf{Identifier memory footprint}, measured as the total memory occupied by position identifiers (MB).
\end{itemize}
All measurements were averaged over multiple independent executions under identical JVM configurations to minimize runtime variability. 
The benchmark was implemented from scratch within a common Java framework. Although the implementations were inspired by the original Logoot~\cite{logoot}~\footnote{https://github.com/t-mullen/logoot-crdt} and LSEQ~\cite{NedelecMMD13}~\footnote{https://github.com/Chat-Wane/LSEQ} implementations, all algorithms were reimplemented using the same software architecture to ensure identical execution conditions and a fair comparison.

\subsubsection{Experimental Setup}
The benchmark simulates a collaborative editing session involving 50 replicas. One replica is designated as the local site and performs both local editing operations and the integration of remote updates. The remaining 49 replicas generate concurrent operations that are asynchronously propagated to the local replica. For each experiment, the total number of operations ranges from 10\,000 to 100\,000. To maximize concurrency, each remote replica generates an equal share of the workload (e.g., 2\,000 operations for a total workload of 100\,000 operations).
Two editing workloads are evaluated:
\begin{enumerate}
\item \textbf{Insertion-only}: 100\% insertions.
\item \textbf{Mixed editing}: 80\% insertions and 20\% deletions.
\end{enumerate}
For each workload, three insertion patterns are considered:
\begin{enumerate} 
	\vspace{-1mm}
	\item \textbf{Beginning:} All insertions occur at the start of the document. 
	\vspace{-2mm}
	\item \textbf{End:} All insertions occur document's end.
	\vspace{-2mm}
	\item \textbf{Random:} Positions are chosen uniformly at random across the document.
	\vspace{-2mm}
\end{enumerate}
\paragraph{Identifier Size Measurement.}
To ensure a fair comparison, memory measurements exclude the JVM object header and account only for the raw identifier fields. Under this model, an ESBT weight occupies 20 bytes, excluding the variable-length sequence path, whereas each Logoot/LSEQ identifier component occupies 16 bytes. Consequently, the total identifier size grows proportionally to the number of identifier components generated by each allocation strategy.

\paragraph{Worst-case Middle Insertion.}
The middle-insertion workload was evaluated using up to 10\,000 operations. Beyond approximately 55\,000 operations, the Logoot and LSEQ implementations exhausted the available JVM heap due to rapid identifier expansion. Restricting this workload ensures that all approaches can be compared under identical execution conditions. The impact of this behavior is further discussed in Sub-Section~\ref{middle-ins}.

\subsubsection{Pure (100\%) Insertion Workload Results}
\subsubsection*{a) Responsiveness}
Figure~\ref{fig:full_scenarios} compares the responsiveness of ESBT with Logoot and the three LSEQ allocation strategies under a pure insertion workload.\\
Under the \emph{Beginning} insertion pattern (Figure~\ref{subfig:res_begin_100}), all baseline approaches exhibit a steady increase in execution time as the number of operations increases. At 100\,000 insertions, Logoot and LSEQ-Random require approximately $940\,$ ms, reflecting the increasing cost of managing progressively longer identifiers. In contrast, ESBT requires only about $61\,$ ms, representing an execution time reduction of approximately 93\%. For the \emph{End} insertion pattern (Figure~\ref{subfig:res_end_100}), execution times decrease for all approaches because newly generated identifiers remain relatively compact. Nevertheless, Logoot and the LSEQ variants still require approximately $470\,$ ms after 100\,000 operations, whereas ESBT remains nearly constant at approximately $55\,$ ms, demonstrating stable allocation and integration performance. The \emph{Random} insertion pattern (Figure~\ref{subfig:res_rand_100}) produces the largest performance differences. Logoot and LSEQ-Random reach approximately 1\,000 ms due to increased identifier complexity under highly interleaved insertions. ESBT requires approximately $106\,$ ms, offering $10\times$ lower latency than Logoot and LSEQ variants.

\begin{figure*}[hbt!]
	\centering
	\subfigure[Beginning insertion pattern.]{
		\includegraphics[width=0.32\textwidth]{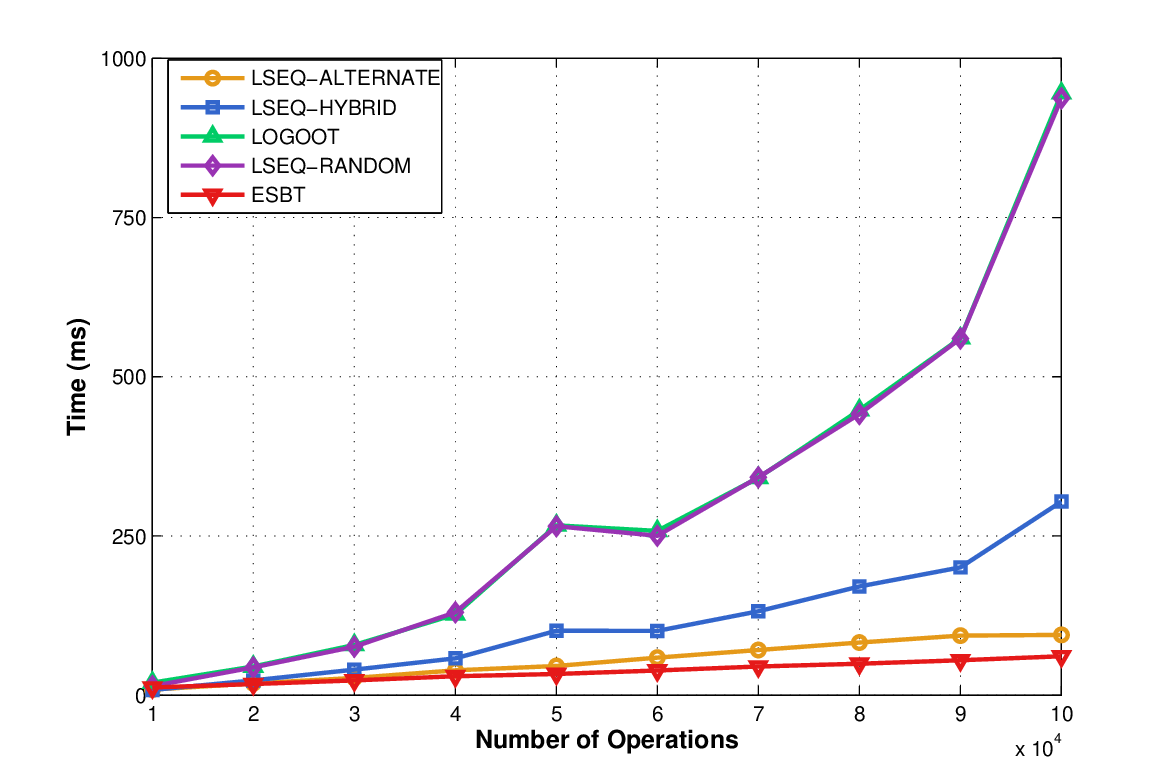}
		\label{subfig:res_begin_100}
	}
	\hspace{-0.3cm}
	\subfigure[End insertion pattern.]{
		\includegraphics[width=0.32\textwidth]{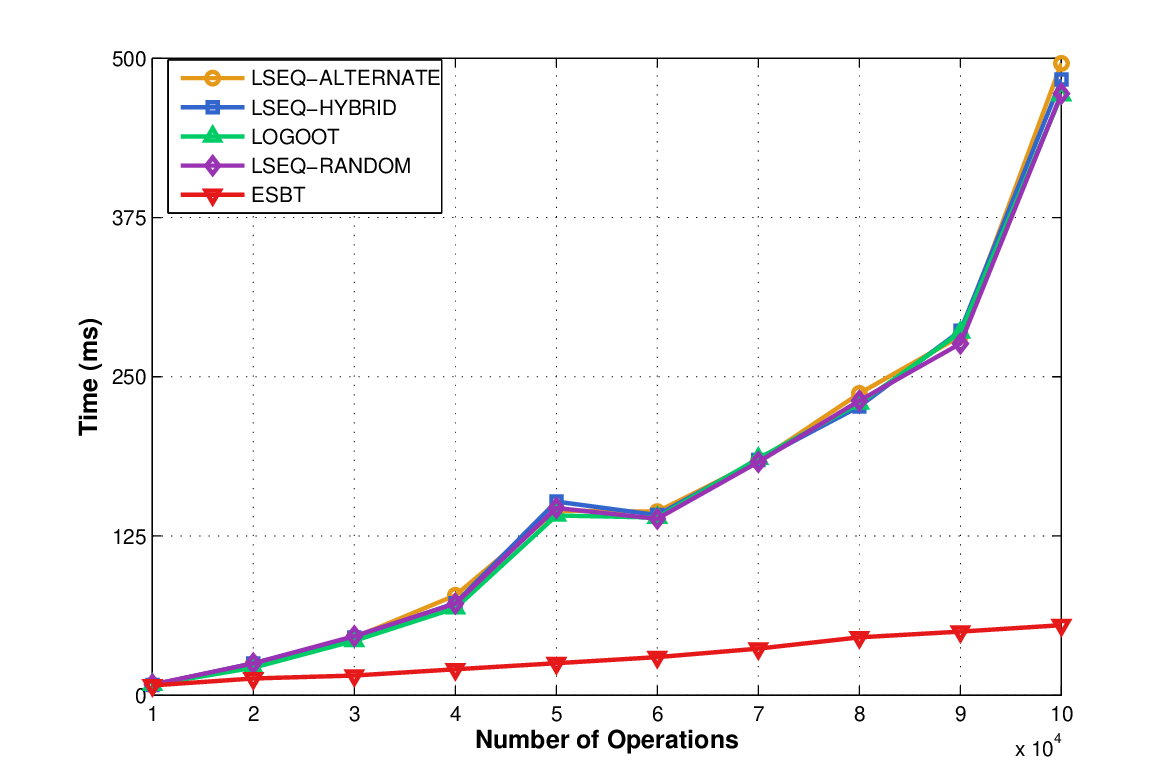}
		\label{subfig:res_end_100}
	}
	\hspace{-0.3cm}
	\subfigure[Random insertion pattern.]{
		\includegraphics[width=0.32\textwidth]{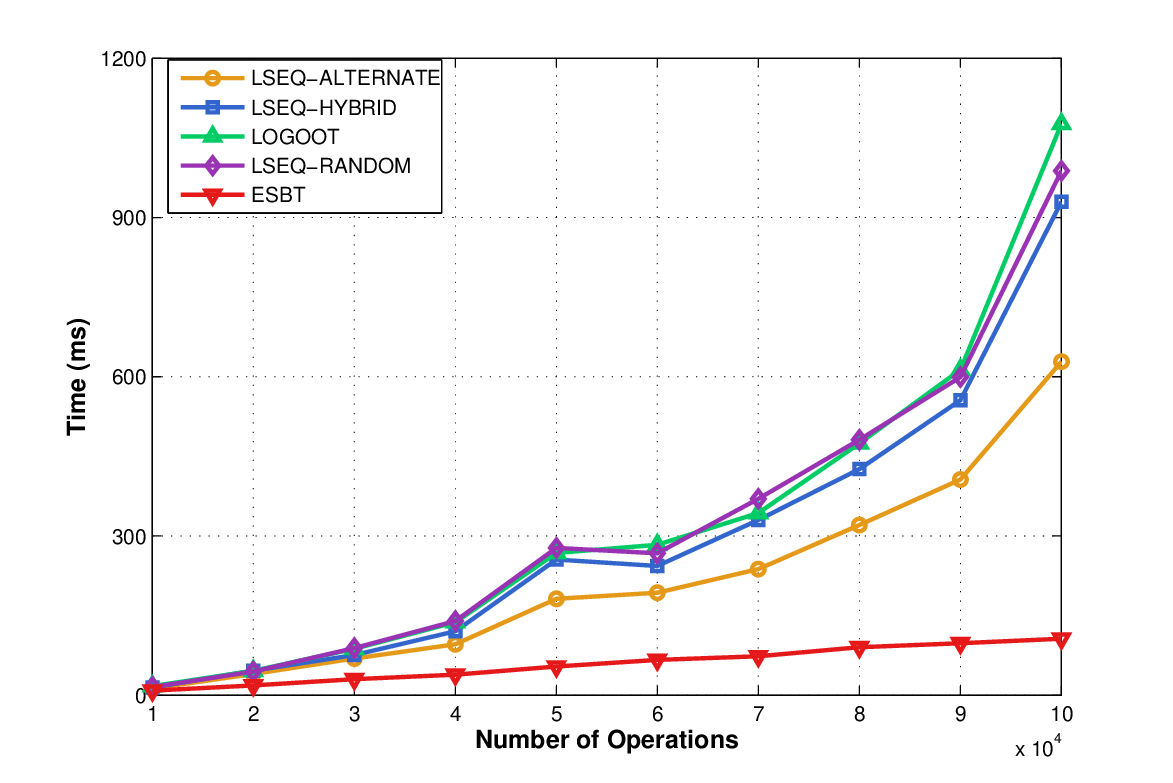}
		\label{subfig:res_rand_100}
	}
	
	\vspace{-0.2cm} 
	\caption{Responsiveness time (ms) under different insertion patterns and 100\% insertion workload.}
	\label{fig:full_scenarios}
\end{figure*}

\begin{figure*}[hbt!]
	\centering	
	\subfigure[Beginning insertion pattern.]{
		\includegraphics[width=0.32\textwidth]{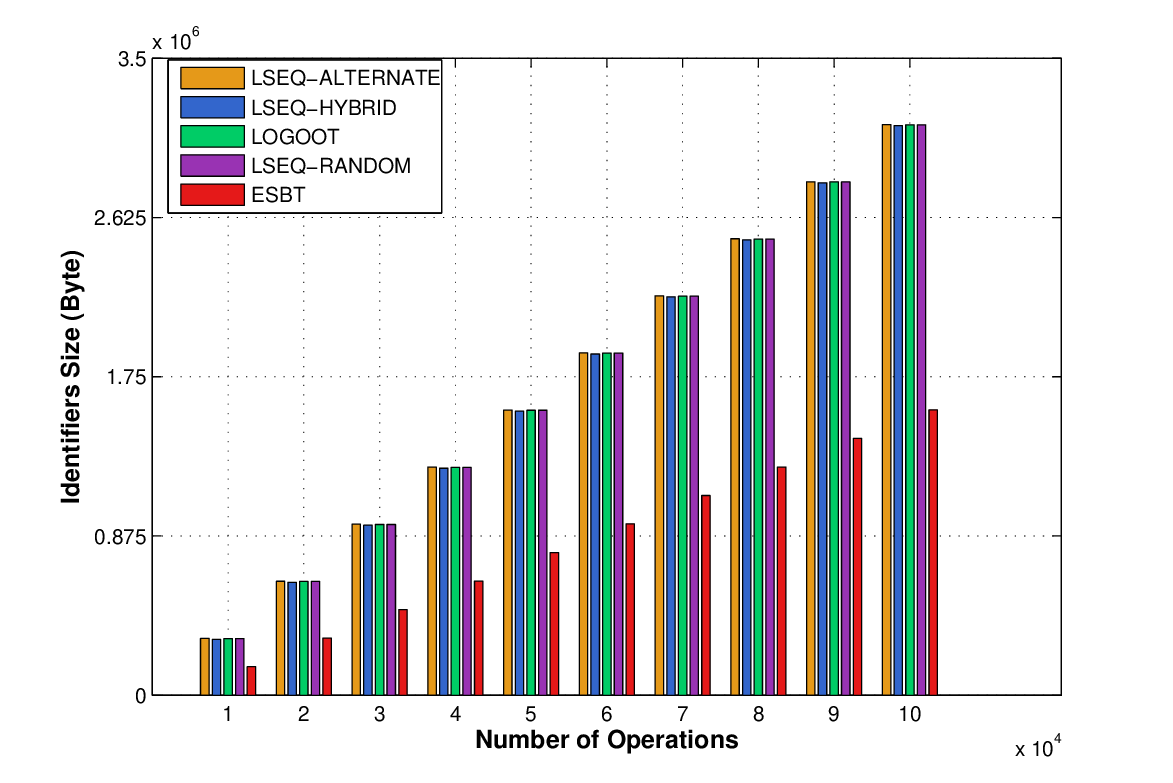}
		\label{subfig:byte_begin_100}
	}
	\hspace{-0.3cm}
	\subfigure[End insertion pattern]{
		\includegraphics[width=0.32\textwidth]{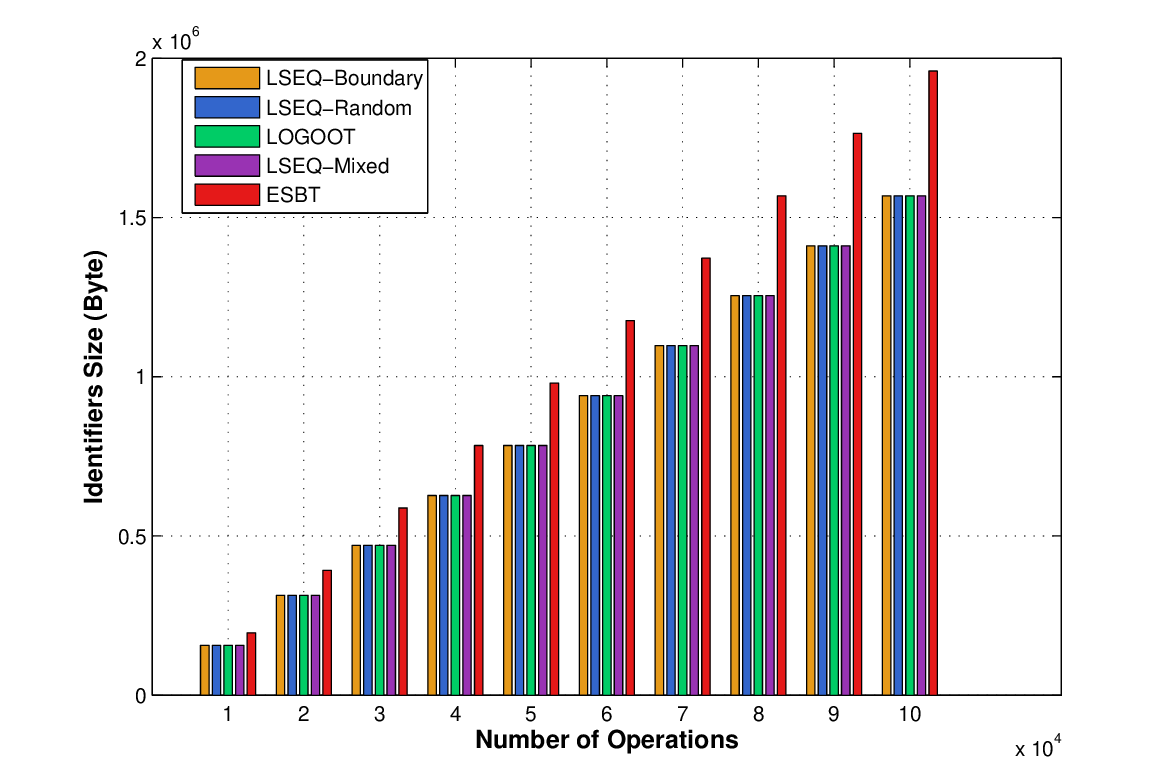}
		\label{subfig:byte_end_100}
	}
	\hspace{-0.3cm}
	\subfigure[Random insertion pattern]{
		\includegraphics[width=0.32\textwidth]{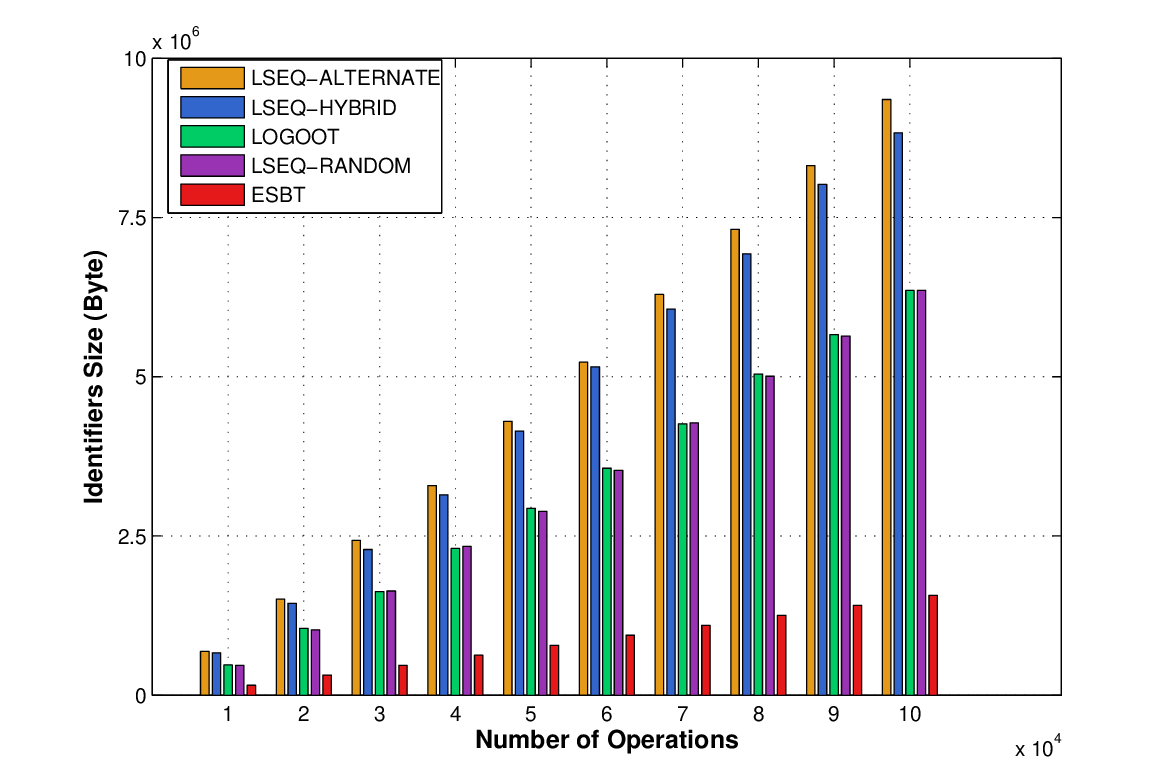}
		\label{subfig:byte_rand_100}
	}
	
	\vspace{-0.2cm}
	\caption{Identifier size (MB) under different insertion patterns and 100\% insertion workload.}
	\label{fig:full_byte_scenarios}
\end{figure*}

\subsubsection*{b) Identifier Memory Footprint}
Figure~\ref{fig:full_byte_scenarios} presents the total identifier memory footprint for the same workloads.\\
Under the \emph{Beginning} insertion pattern (Figure~\ref{subfig:byte_begin_100}), the memory consumption of Logoot and the LSEQ variants increases steadily as identifier lengths grow, reaching approximately $3.13$~MB after 100\,000 insertions. In comparison, ESBT increases from approximately $0.156$~MB to $1.56$~MB, maintaining a substantially smaller memory footprint throughout the experiment. For the \emph{End} insertion pattern (Figure~\ref{subfig:byte_end_100}), all approaches produce relatively compact identifiers because insertions occur at the document boundary. In this favorable scenario, ESBT identifiers are approximately 4 bytes larger than those of Logoot and LSEQ due to the additional metadata associated with the ESBT weight. However, this modest overhead is accompanied by significantly lower execution times, as shown in Figure~\ref{subfig:res_end_100}. The \emph{Random} insertion pattern (Figure~\ref{subfig:byte_rand_100} ) highlights the scalability of the different allocation strategies. Logoot and the LSEQ variants require between $6.3$ to $9.3$~MB after 100\,000 insertions, whereas ESBT remains close to $1.56$~MB. This corresponds to a reduction in identifier memory consumption of approximately 75 to 83\%, depending on the baseline algorithm.
\subsubsection{ Mixed (80\% Insert + 20\% Delete) Workload Results}
\subsubsection*{a) Responsiveness}
Figure~\ref{fig:resp_mixed_scenarios} reports the responsiveness of the evaluated approaches under a mixed workload consisting of 80\% insertions and 20\% deletions.
Introducing deletion operations reduces the number of active document elements and consequently lowers execution times for all approaches. Nevertheless, the relative performance trends remain unchanged.\\
Under the \emph{Beginning} insertion pattern (Figure~\ref{subfig:80_begin}), Logoot and the LSEQ variants require approximately $366\,$ ms after 100\,000 operations, whereas ESBT completes the same workload in approximately $64\,$ ms. For the \emph{End} insertion pattern (Figure~\ref{subfig:80_end}), the baseline approaches stabilize around $220\,$ ms owing to the reduced identifier complexity associated with end insertions. ESBT again exhibits the lowest execution time, remaining close to $56\,$ ms throughout the experiment. The \emph{Random} insertion pattern (Figure~\ref{subfig:80_random}) remains the most demanding scenario. Logoot, LSEQ-Random, and LSEQ-Mixed require approximately $400\,$ ms after 100\,000 operations, while ESBT requires only about $107\,$ ms, corresponding to execution-time reductions ranging from approximately 73\% to 89\%, depending on the baseline considered.\\
These results indicate that the performance advantages of ESBT remain consistent even when insertion and deletion operations are combined.

\begin{figure*}[hbt!]
	\centering
	\subfigure[Beginning 80\% insertion pattern.]{
		\includegraphics[width=0.32\textwidth]{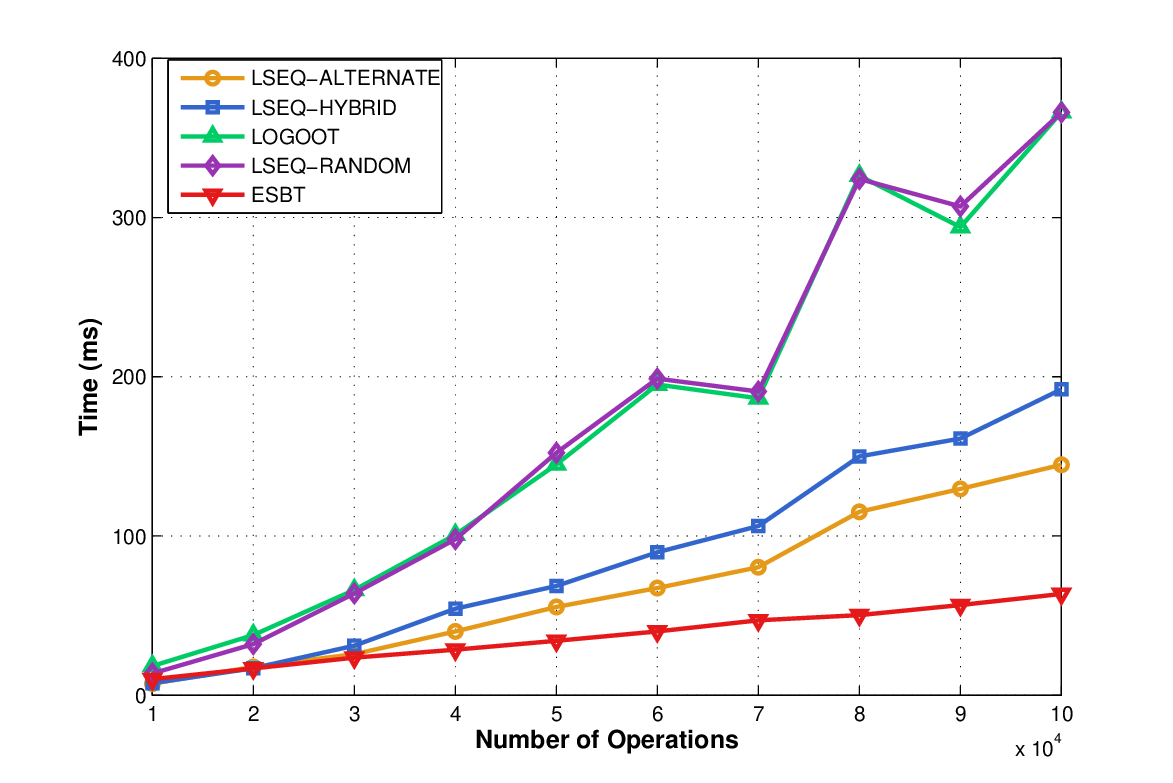}
		\label{subfig:80_begin}
	}
	\hspace{-0.3cm}
	\subfigure[End 80\% insertion pattern.]{
		\includegraphics[width=0.32\textwidth]{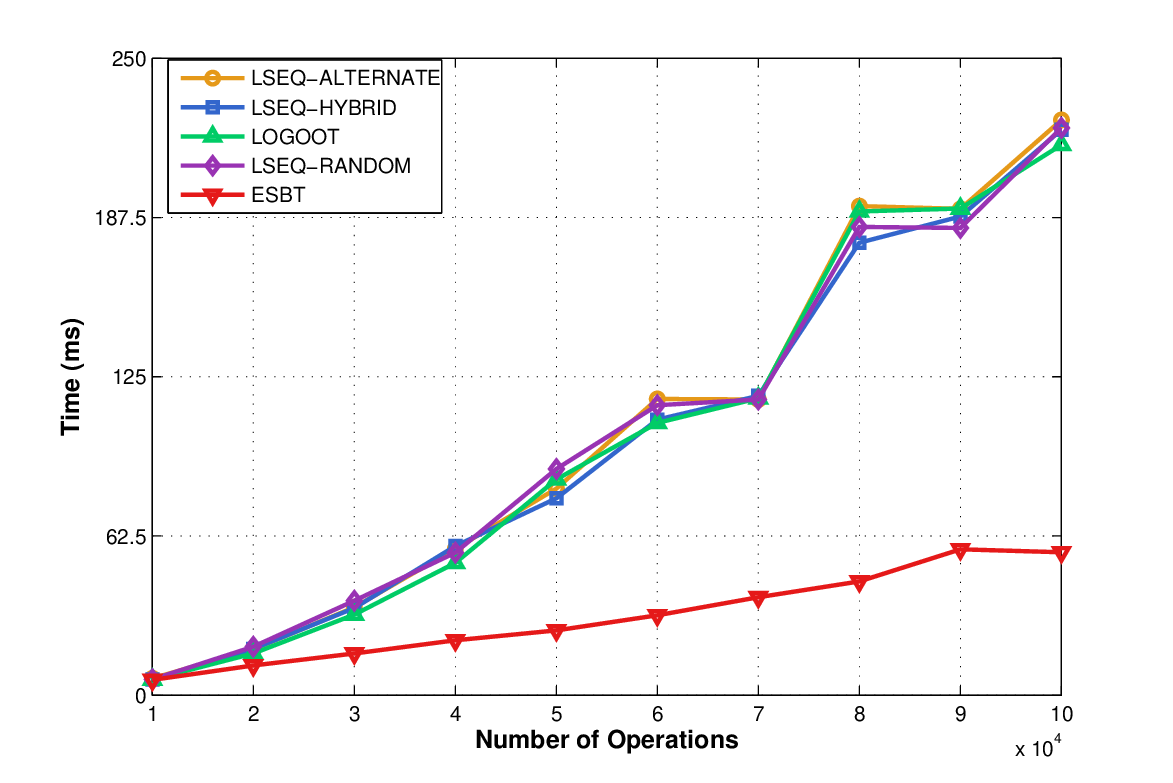}
		\label{subfig:80_end}
	}
	\hspace{-0.3cm}
	\subfigure[Random 80\% insertion pattern.]{
		\includegraphics[width=0.32\textwidth]{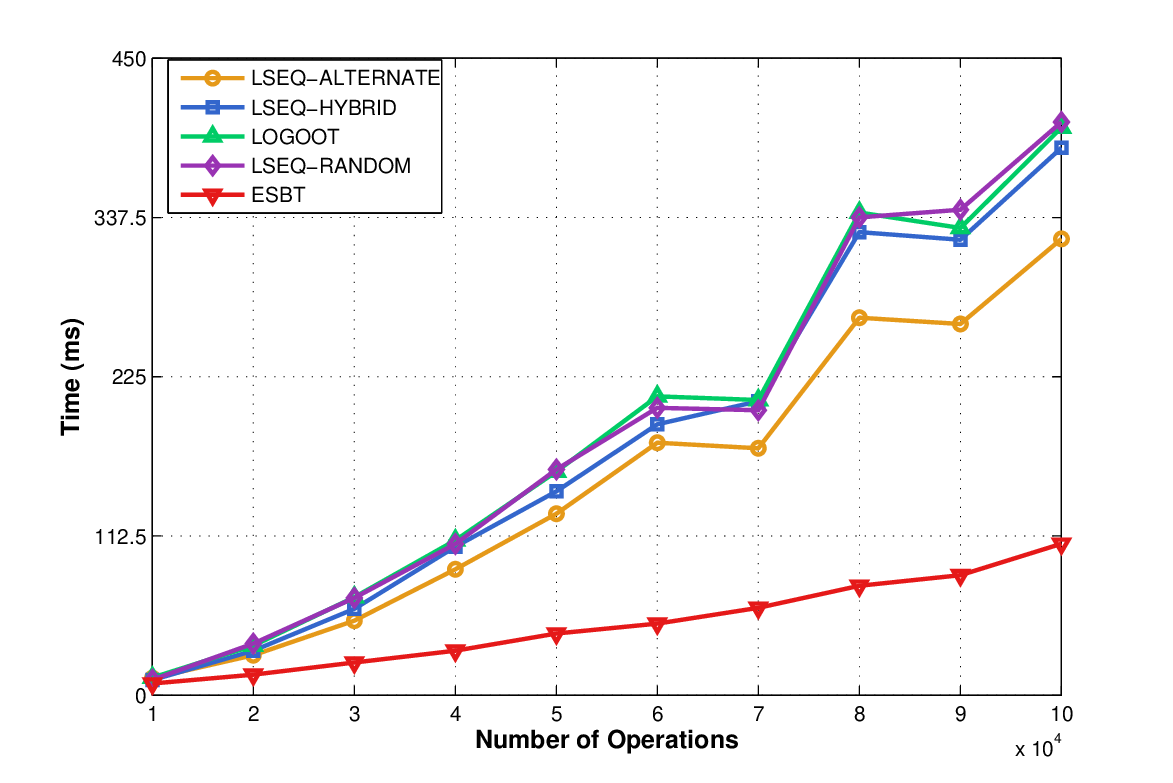}
		\label{subfig:80_random}
	}
	\vspace{-0.2cm}
	\caption{Responsiveness time (ms) under different insertion patterns and Mixed workloads.}
	\label{fig:resp_mixed_scenarios}
	
\end{figure*}
\vspace{-0.2cm} 

\begin{figure*}[hbt!]
	\subfigure[Beginning 80\% insertion pattern.]{
		\includegraphics[width=0.32\textwidth]{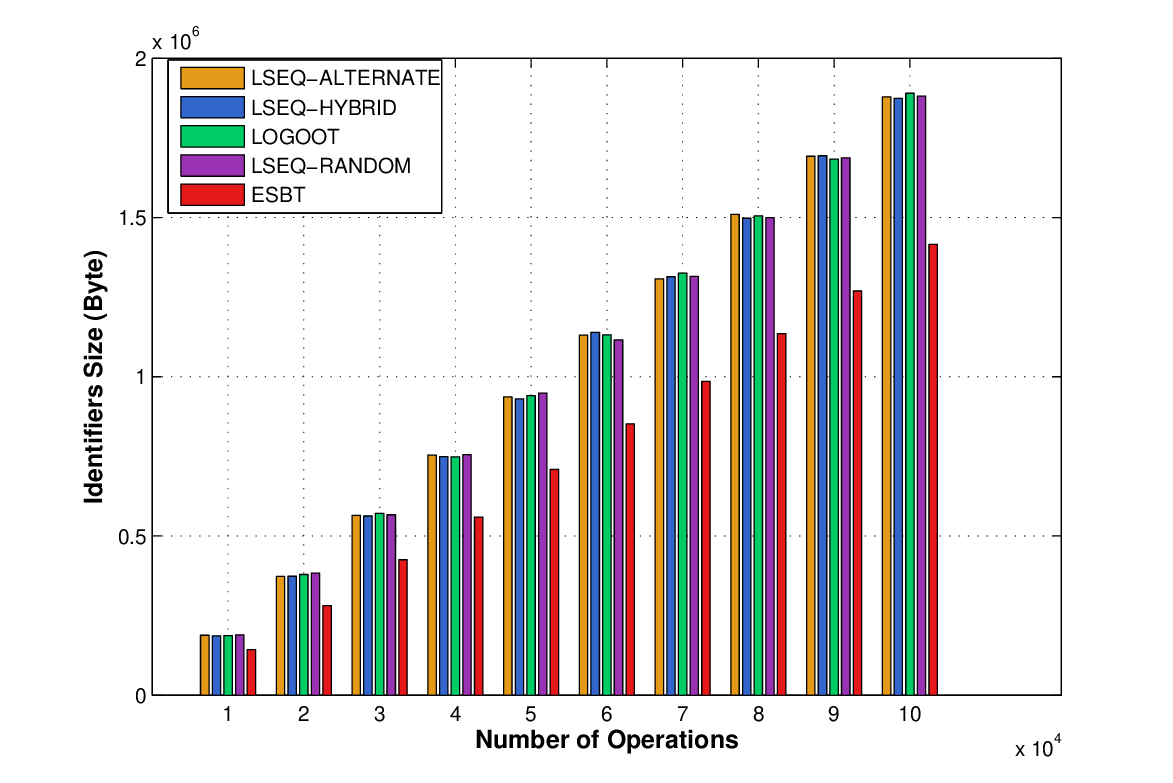}
		\label{subfig:byte80_beginning}
	}
	\hspace{-0.3cm}
	\subfigure[End 80\% insertion pattern.]{
		\includegraphics[width=0.32\textwidth]{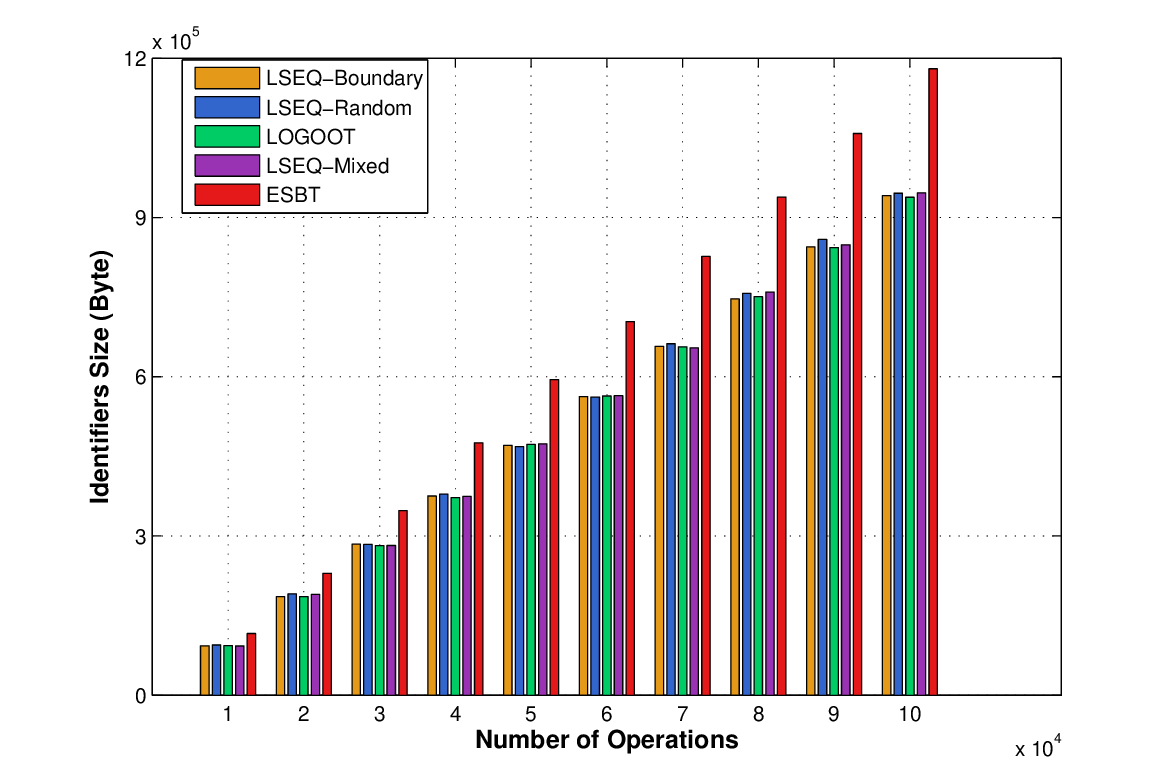}
		\label{subfig:byte80_end}
	}
	\hspace{-0.3cm}
	\subfigure[Random 80\% insertion pattern.]{
		\includegraphics[width=0.32\textwidth]{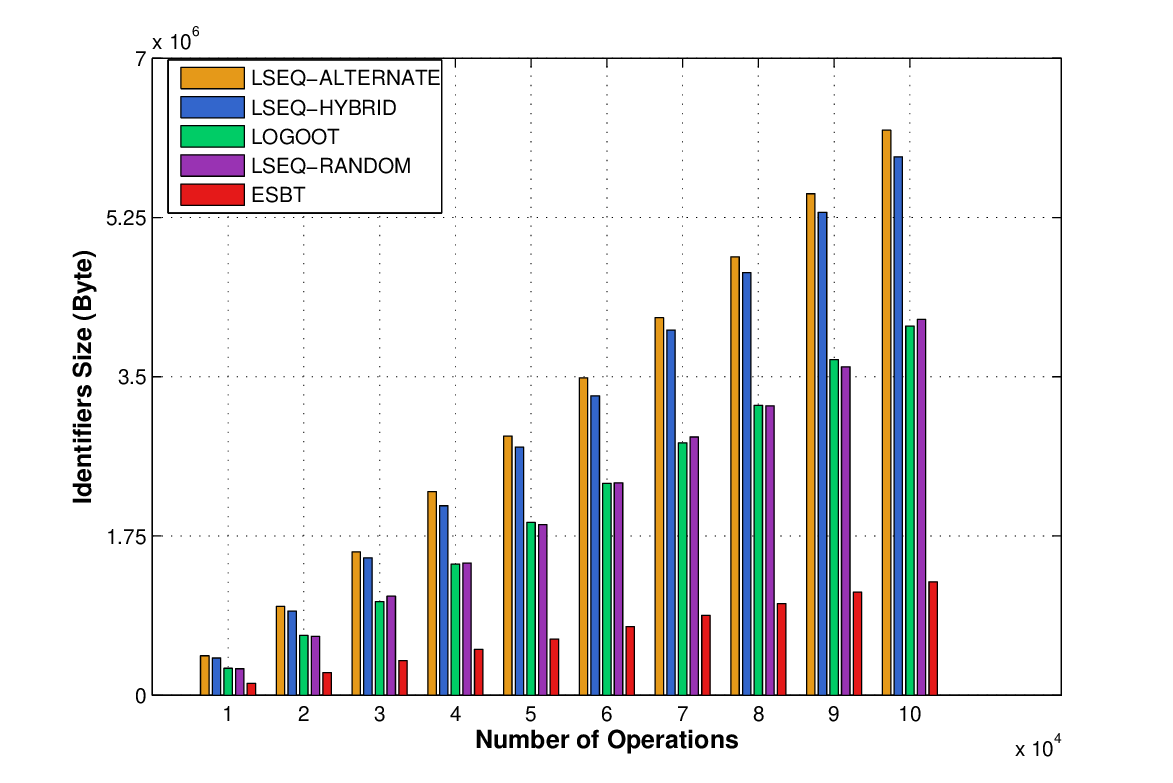}
		\label{subfig:byte80_random}
	}
	
	\vspace{-0.2cm}
	\caption{Identifier size (MB) under different insertion patterns and Mixed workload.}
	\label{fig:byte_mixed_scenarios}
\end{figure*}

\subsubsection*{b) Identifier Memory Footprint}
Figure~\ref{fig:byte_mixed_scenarios} compares the total identifier memory footprint under the mixed workload.
For the \emph{Beginning} insertion pattern (Figure~\ref{subfig:byte80_beginning}), Logoot and the LSEQ variants occupy between $1.8$ to $1.9$~MB after 100\,000 operations. ESBT requires approximately $937$~KB, reducing identifier memory consumption by about 50\%. 
Under the \emph{End} insertion pattern (Figure~\ref{subfig:byte80_end} ), all approaches maintain relatively compact identifiers. ESBT remains approximately 20\% larger than the baseline approaches because of its additional identifier metadata. However, this small increase in memory usage is accompanied by substantially lower execution times, as shown in Figure~\ref{subfig:80_end}. 
For the \emph{Random} insertion pattern (Figure~\ref{subfig:byte80_end}), identifier growth again becomes more pronounced for the baseline algorithms, whose memory footprint ranges from approximately $4.16$ to $6.20$~MB. ESBT remains close to $1.04$~MB, reducing identifier memory consumption by up to 79\%.
\subsection{Middle-Insertion pattern in pure insertion Workloads}
\label{middle-ins}
The middle-insertion workload represents the worst-case scenario for sequence CRDTs because each new identifier must be allocated strictly between two existing identifiers. This workload repeatedly refines the allocation interval and therefore stresses identifier allocation strategies more than the other insertion patterns.\\
For fairness, the evaluation was limited to 10\,000 operations, corresponding to the largest workload that all evaluated algorithms could complete.
During preliminary experiments, the baseline implementations were unable to complete larger workloads because of JVM heap exhaustion resulting from rapid identifier growth. Specifically, Logoot failed beyond approximately 30\,000 operations, LSEQ-Boundary and LSEQ-Mixed beyond approximately 40\,000 operations, and LSEQ-Random beyond approximately 55\,000 operations.

\subsubsection*{a) Responsiveness} 
Figure~\ref{fig:full_scenarios_Middle} compares the responsiveness of ESBT with the evaluated baseline approaches. Execution time increases rapidly for all baseline CRDTs as the number of middle insertions grows. At 10\,000 operations, Logoot and LSEQ-Random require approximately $380\,$ ms, whereas LSEQ-Boundary and LSEQ-Mixed exceed $700\,$ ms. This behavior is consistent with the increasing complexity of identifier allocation and comparison under repeated midpoint insertions. In contrast, ESBT increases from approximately $5\,$ ms at 1\,000 operations to only $52\,$ ms at 10\,000 operations. Compared with the baseline approaches, ESBT reduces execution time by approximately 86 to 93\%, corresponding to a speedup ranging from $7\times$ to $14\times$, depending on the allocation strategy.

\subsubsection*{b) Identifier Memory Footprint}
Figure~\ref{subfig:ByteFullMiddle} reports the total identifier memory footprint under the same workload. 
Repeated middle insertions produce rapid identifier growth in all baseline approaches. At 10\,000 operations, LSEQ-Boundary and LSEQ-Mixed require more than $8$~MB of identifier storage, whereas Logoot and LSEQ-Random require approximately $4.2$~MB. These results explain why the baseline implementations were unable to complete larger workloads within the available JVM heap. In comparison, ESBT maintains a substantially smaller memory footprint, increasing from approximately $0,015$~MB at 1\,000 operations to $0,298$~MB at 10\,000 operations. This corresponds to a reduction in identifier memory consumption ranging from approximately 93\% to 96\% (or $13\times$ to $27\times$ smaller) relative to the evaluated baseline approaches. 
These results demonstrate that the bounded identifier allocation strategy of ESBT effectively limits identifier growth under repeated midpoint insertions, making it well suited for highly adversarial collaborative editing workloads.
\begin{figure*}[hbt!]
	\centering
	\subfigure[Responsiveness time under 100\% middle-position insertions]{
		\includegraphics[width=0.32\textwidth]{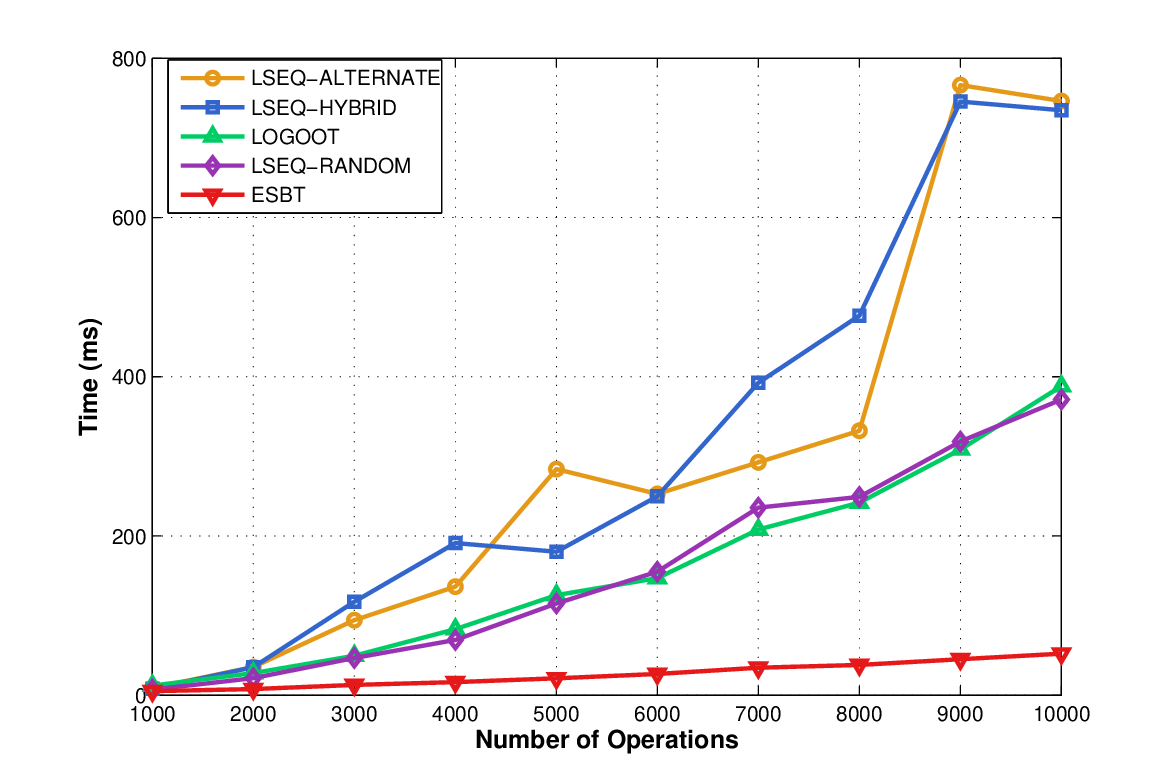}
		\label{subfig:fullMiddle_begin}
	}
	\hspace{-0.3mm}
	\subfigure[Identifier size under 100\% middle-position insertions]{
		\includegraphics[width=0.32\textwidth]{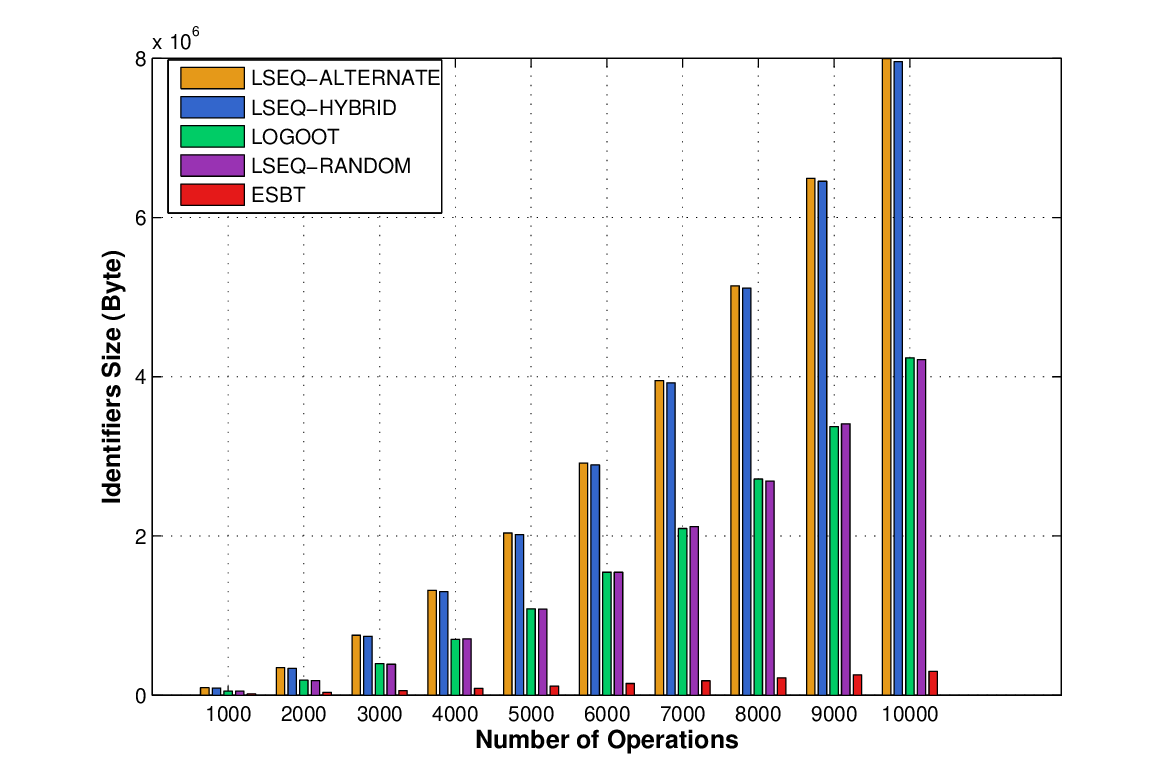}
		\label{subfig:ByteFullMiddle}
	}
	
	\caption{ESBT and baseline CRDTs performance under a middle-insertion pattern with pure insertion workloads.}
	\label{fig:full_scenarios_Middle}
\end{figure*}


\subsubsection{Discussion}
\label{subsec:discussion}
The experimental results consistently show that ESBT achieves lower execution time and smaller identifier memory footprints than the evaluated sequence CRDTs across all workloads. The largest improvements are observed under the Beginning, Random, and Middle insertion patterns, where identifier growth becomes the dominant factor affecting the performance of existing variable-size identifier allocation strategies.\\
The observed performance improvements originate from two complementary design choices. First, ESBT bounds the growth of the fraction component through the parameter $D_{\max}$ and delegates further disambiguation to successive identifier layers only when necessary. Consequently, identifier expansion remains proportional to the actual level of contention rather than to the number or locality of insertions. Second, storing document elements in a Red--Black tree preserves logarithmic-time $O(\log n)$ search, insertion, and deletion, allowing the execution time to remain stable even as the document size increases.\\ 
Under the End insertion workload, ESBT exhibits a modest increase in identifier size compared with the best-performing baseline. This difference is primarily due to the current prototype implementation, which explicitly stores the sequence-path field even when it contains only its default value ($sc=[0]$). Since this default value can be represented implicitly during serialization and reconstructed during deserialization, the additional storage cost is implementation-specific rather than intrinsic to the ESBT allocation strategy. In contrast, the growth observed in Logoot and LSEQ identifiers is an inherent property of their allocation mechanisms. As concurrent insertions repeatedly target the same allocation interval, additional identifier components are introduced to preserve ordering, progressively increasing both memory consumption and comparison costs. This behavior becomes particularly pronounced under adversarial middle-insertion workload, where ESBT reduces execution time by up to 86.5\% and identifier memory consumption by up to 92.8\% relative to the evaluated baselines.
Finally, ESBT can be naturally extended to support very large-scale collaborative editing environments. The current implementation of fraction $f$ uses 32-bit integers, which are sufficient for the workloads evaluated in this paper. Extending these fraction fields to 64-bit integers would substantially enlarge the available identifier space, allowing much deeper mediant refinement before reaching the 
$D_{\max}$ bound, without modifying the ESBT allocation algorithm, its correctness guarantees, or its asymptotic complexity.

\section{Related Works}
\label{sec:related_work}
Several sequence CRDTs have been proposed for collaborative editing. Each aims to maintain a consistent order of elements inserted simultaneously while avoiding centralized coordination. This section reviews representative approaches and discusses their respective strengths and limitations.\\
WOOT (WithOut Operational Transforms)~\cite{wooto} is considered as the first based-CRDT algorithm proposed in literature. In WOOT, each element of a text document is associated with a unique structure comprising its identifier, its content, and the identifiers of its predecessor and successor elements. WOOT uses a monotonic linearization function to maintain convergence between sites, but does not allow the deletion of elements, which introduces additional memory and communication overhead as deleted elements accumulate over time.
To address this issue, WOOT uses the tombstone technique~\cite{crdt1}, where deleted elements are retained but made invisible to users. However, this approach can lead to memory and bandwidth overhead problems~\cite{sec17}. 
Improved versions WOOTO~\cite{wooto} and WOOTH~\cite{wooth} have been proposed, but they only optimize integration complexity from $O(n^3)$ to $O(n^2)$ where $n$ denotes the number of elements. 
In~\cite{woot22}, the authors provide a formal proof that WOOT achieves strong eventual consistency and propose size-optimized messages for update operations using a sort key-based protocol.

TreeDoc~\cite{treedoc} is a CRDT that uses a binary tree to represent the document. Each node of the tree contains an element and two children (left and right). Each node has a unique identifier which is its path in the tree. The content of node is represented by the infix path of the tree. However, when many insertions occur near the end of the document, which is the normal editing situation, repeated localized insertions may increase tree depth and identifier length.
TreeDoc uses tombstone as it cannot delete a node that already has a children. But it can safely delete nodes that does not have any visible children. Moreover, maintaining balance and reclaiming tombstones may incur additional synchronization overhead in large-scale collaborative environments.

Logoot~\cite{logoot} is a CRDT that considers a document as a set of lines all identified by absolute positions totally order using lexicographic order. These identifiers are only used once, meaning they cannot be reused. Logoot document consists of lines defined as \textit{$<$pid, content$>$} where the \textit{content} is a line of text and the \textit{pid} is a unique position identifier. Logoot does not use tombstone for hiding deleted elements, it removes physically the elements form its document. However, when several insertions are applied in the same part of the document, the generated identifiers for these insertions progressively exhaust the available identifier space. This leads to increasingly long identifiers and higher comparison costs.
  
LSEQ~\cite{NedelecMMD13, NedelecMM21} refines Logoot's identifier allocation by using different identifier allocation strategies adapted to different editing patterns. LSEQ combines two classical strategies: a Random allocator, which disperses identifiers uniformly within available intervals, and a Boundary allocator, which selects positions at interval extremities. While Random strategy enhances distribution fairness, Boundary strategy provides predictability at the cost of accelerated identifiers growth under high concurrency. LSEQ introduces an adaptive round-robin policy that alternates between these strategies to balance dispersion and stability while reducing identifier depth compared to either strategy alone. 
Despite these improvements, LSEQ still suffers from unbounded identifier growth under intense editing, path expansion in highly concurrent regions, and workload sensitivity, since boundary-heavy scenarios or repeated midpoint insertions can still produce long identifier paths that degrade memory footprint and integration responsiveness. Also, LSEQ does not ensure deterministic and bounded identifier allocation since it relies on several antagonist allocation strategies. Therefore, without coordination among collaborators, identifiers can grow quadratically in size~\cite{NedelecMMD13a}.

\begin{table*}[!ht]
\centering
\caption{Comparison of representative sequence CRDT algorithms.}
\label{com_crdt}

\begin{tabular}{
l
>{\centering\arraybackslash}p{1.5cm}
>{\centering\arraybackslash}p{1.5cm}
>{\centering\arraybackslash}p{2.4cm}
>{\centering\arraybackslash}p{1.7cm}
>{\centering\arraybackslash}p{1.6cm}
>{\centering\arraybackslash}p{2.2cm}
>{\centering\arraybackslash}p{1.6cm}
}
\toprule
\multirow{2}{*}{\textbf{Algorithm}} &
\multicolumn{2}{c}{\textbf{Complexity}} &
\multirow{2}{*}{\textbf{Data Structure}} &
\multirow{2}{*}{\begin{tabular}[c]{@{}c@{}}\textbf{Tombstone}\\ \textbf{Based}\end{tabular}} &
\multirow{2}{*}{\begin{tabular}[c]{@{}c@{}}\textbf{GC}\\ \textbf{Required}\end{tabular}} &
\multirow{2}{*}{\begin{tabular}[c]{@{}c@{}}\textbf{Identifier}\\ \textbf{Size}\end{tabular}} &
\multirow{2}{*}{\begin{tabular}[c]{@{}c@{}}\textbf{Metadata}\\ \textbf{Overhead}\end{tabular}} \\
\cmidrule(lr){2-3}
& \textbf{Local} & \textbf{Remote} & & & & & \\
\midrule

WOOT~\cite{woot}
& $O(n)$
& $O(n^2)$
& Array List
& Yes
& Yes
& Fixed
& High \\

TreeDoc~\cite{treedoc}
& $O(n)$
& $O(n\log n)$
& Extended Binary Tree
& Yes
& Yes
& Variable
& High \\

Logoot~\cite{logoot}
& $O(n)$
& $O(n\log n)$
& Array List
& No
& No
& Variable
& High \\

LSEQ~\cite{NedelecMMD13}
& $O(n)$
& $O(n\log n)$
& Array List
& No
& No
& Variable
& Medium \\

RGA~\cite{rga}
& $O(n)$
& $O(\log n)$
& Linked List
& Yes
& Yes
& Fixed
& High \\

Eg-Walker~\cite{Eg-walker}
& $O(\log n)$
& $O(\log n)$
& B-Tree
& Yes
& Yes
& Fixed
& Low \\

\textbf{ESBT}
& $\mathbf{O(\log n)}$
& $\mathbf{O(\log n)}$
& \textbf{Red-Black Tree}
& \textbf{No}
& \textbf{No}
& \textbf{Variable}
& \textbf{Low} \\

\bottomrule
\end{tabular}

\vspace{2mm}
\footnotesize
\textit{Metadata overhead} qualitatively reflects the amount of auxiliary information maintained by each algorithm, including tombstones, identifier expansion, and synchronization metadata. \textit{Identifier size} characterizes the evolution of position identifiers
: \emph{Fixed} denotes constant-size identifiers, whereas \emph{Variable} denotes identifiers whose size may increase depending on insertion patterns and editing workloads.
\end{table*}
RGA (Replicated Growable Array)~\cite{rga} is a linked-list
data ordered consistently with the happened-before relation. Each element is assigned a unique timestamp-based identifier (s4vector).  RGA uses a hash table maps identifiers to document elements for efficient lookup and conflict resolution. RGA uses tombstones to preserve ordering after deletions, and its correctness has been formally established in~\cite{sec17}. However, its message size grows with the number of replicas and update operations. Several extensions, including PRGA~\cite{prga} and Fugue~\cite{fugue}, have been proposed to improve performance, But still rely on tombstones, incurring metadata and garbage-collection overhead.\\
Eg-walker~\cite{Eg-walker} introduces a hybrid approach that combines principles of Operational Transformation (OT) and Conflict-free Replicated Data Types (CRDTs) that embeds editing updates operations as a directed acyclic graph (DAG) to maintain causal relations among operations. Eg-Walker constructs a causality-preserving walk over the DAG to derive a consistent total order, eliminating the need for variable-length identifiers as used in Logoot or LSEQ. However, Eg-Walker inherits several limitations associated with DAG-based representations including unbounded DAG growth, complex garbage collection, expensive graph traversal during integration, and high memory consumption under intense concurrent workloads which makes it less suitable for large-scale and long-lived collaborative documents.\\
Table~\ref{com_crdt} summarizes the main characteristics of representative sequence CRDTs, comparing their computational complexity, underlying data structures, tombstone management, garbage collection (GC) requirement, identifier behavior and metadata overhead. Existing approaches can generally be categorized according to their ordering strategy. Tombstone-based CRDTs maintain deleted elements to preserve ordering, which increases metadata size and may require garbage collection (GC) to reclaim obsolete state. For example, WOOT stores predecessor and successor references together with tombstones for each deleted element. RGA also maintains tombstones and additional timestamp-based identifiers to preserve causal ordering. In contrast, identifier-allocation approaches (e.g., Logoot and LSEQ) eliminate tombstones but may suffer from identifier growth under unfavorable insertion patterns, increasing both memory consumption and comparison costs. These pathological insertion patterns may also be deliberately generated by an adversarial participant to inflate identifier size and degrade system performance. ESBT maintains low metadata overhead by storing only the weight and a lightweight synchronization counter, without tombstones or auxiliary ordering structures. ESBT belongs to the second category but introduces a bounded Stern–Brocot-based identifier allocation strategy with logarithmic-time operations through its integration with a self-balanced Red–Black tree.
The effectiveness of these design choices is quantitatively evaluated in Section~\ref{sec:performance} through execution time and memory consumption experiments under representative collaborative editing workloads.

\section{Conclusion}
\label{sec:conclusion}
In this paper, we proposed ESBT, a novel sequence CRDT that addresses the scalability limitations of existing identifier-allocation approaches for collaborative editing. ESBT combines deterministic identifier allocation with efficient synchronization, providing bounded identifier growth, tombstone-free deletion management, and logarithmic-time document operations.
Experimental results demonstrate that ESBT consistently reduces identifier memory consumption and execution time compared with representative sequence CRDTs, including Logoot~\cite{logoot} and LSEQ~\cite{NedelecMMD13, NedelecMM21}, across diverse collaborative editing workloads. The proposed approach also maintains stable performance under challenging insertion patterns while preserving deterministic ordering and Strong Eventual Consistency.\\
Future work will focus on four directions. First, we will investigate adaptive mechanisms for automatically tuning the 
$D_{\max}$ parameter according to observed editing dynamics and workload characteristics. Second, we will explore compact encoding and serialization techniques for the sequence path to further reduce identifier storage requirements. Third, we plan to integrate ESBT as a pluggable identifier allocation strategy into existing collaborative editing frameworks, such as Yjs\footnote{https://docs.yjs.dev/api/internals} and Automerge\footnote{https://automerge.org/}, and evaluate its effectiveness under real-world collaborative workloads. Finally, we will investigate the behavior and performance of ESBT under adverse network conditions, including network partitions, prolonged disconnections, and replica recovery, to further assess its robustness in large-scale distributed collaborative environments.

\bibliographystyle{unsrtnat} 
\bibliography{mybib}

\end{document}